\documentclass[11pt]{article}

\usepackage[margin=1in]{geometry}
\usepackage{setspace}
\usepackage{natbib}
\usepackage{appendix}
\usepackage{dsfont}
\usepackage{amsmath,amssymb,amsfonts,amsthm,mathtools}
\usepackage{bm}
\usepackage{bbm}
\usepackage{tikz}
\usetikzlibrary{arrows.meta, positioning}
\usepackage{comment}
\usepackage{algorithm}
\usepackage{algpseudocode}
\usepackage{graphicx}
\usepackage{booktabs}
\usepackage{multirow}
\DeclareMathOperator*{\argmax}{arg\,max}

\newcommand{\kl}{D_{\operatorname{KL}}}
\usepackage{parskip}
\usepackage{authblk}

\usepackage[colorlinks=true,
            linkcolor=blue,
            citecolor=blue,
            urlcolor=blue]{hyperref}

\usepackage{cleveref}

\newcommand{\X}{X}
\newcommand{\x}{x}
\newcommand{\Y}{Y}

\newcommand{\D}{D}
\newcommand{\Q}{Q}

\newcommand{\PP}{P}

\newcommand{\cX}{{\cal \X}}
\newcommand{\cY}{{\cal \Y}}

\newcommand{\prob}{\mathbb P}

\theoremstyle{plain}
\theoremstyle{plain}
\newtheorem{theorem}{Theorem}[section]
\newtheorem{proposition}[theorem]{Proposition}

\theoremstyle{definition}
\newtheorem{definition}[theorem]{Definition}

\theoremstyle{remark}

\definecolor{codegreen}{rgb}{0,0.6,0}
\definecolor{codegray}{rgb}{0.5,0.5,0.5}
\definecolor{codepurple}{rgb}{0.58,0,0.82}
\definecolor{backcolour}{rgb}{0.95,0.95,0.92}

\crefname{assumption}{assumption}{assumptions}
\Crefname{assumption}{Assumption}{Assumptions}

\crefname{theorem}{theorem}{theorems}
\Crefname{theorem}{Theorem}{Theorems}

\crefname{lemma}{lemma}{lemmas}
\Crefname{lemma}{Lemma}{Lemmas}

\crefname{figure}{fig.}{}
\Crefname{figure}{Fig.}{}

\title{
\bfseries Optimal e-variables under constraints
}

\author[1]{Aytijhya Saha}
\author[2]{Aaditya Ramdas}

\affil[1]{Massachusetts Institute of Technology.  \texttt{aytijhya@mit.edu}}
\affil[2]{Carnegie Mellon University. \texttt{aramdas@cmu.edu}}
\begin{document}
\date{}

\maketitle

\begin{abstract}
E-variables enable safe and anytime-valid inference, with log-optimal
e-variables given by the likelihood ratio of the 
least favorable distributions (LFDs) when they exist in composite settings.
While this unconstrained theory is well understood, one may need/wish to impose additional structural constraints, including differential privacy, quantization, boundedness, or moment restrictions.
We show that under these constraints,
log-optimal constrained e-variables can often be constructed by a simple \emph{optimize-then-constrain} principle:
first compute the unconstrained log-optimal e-variable, then impose the constraint via an appropriate transformation.
Thus, the constrained growth-rate optimization problem
does not require solving for a different LFD pair; the constrained optimal solution is just a post-processing of the unconstrained optimal solution.
\end{abstract}

\section{Introduction}
E-variables provide a foundation for safe and anytime-valid inference \citep{ramdas2022game,ramdas2024hypothesis}.
A nonnegative random variable $E$ is an e-variable for a null
$H_0$ (set of distributions consistent with a null hypothesis) if
\begin{equation}
\label{def-e-val}
    \sup_{P_0 \in H_0} \mathbb{E}_{P_0}[E] \le 1.
\end{equation}
The value realized by an e-variable (that is, its instantiation) is called an e-value.
In simple versus simple hypothesis testing, the likelihood ratio maximizes expected log-growth under the alternative \citep{shafer2021testing,kelly1956new,breiman1961optimal}.
In composite versus composite testing, \cite{saha2025huber} shows that log-optimal or growth-rate-optimal in the worst case (GROW) e-variables \citep{grunwald2020safe} arise from so-called ``least favorable
distributions'' (LFDs) \citep{huber1965robust}, when they exist.

While this unconstrained theory is well understood, one may often need to impose additional structural constraints, including local differential privacy, quantization, boundedness, or moment restrictions.
Each constraint defines a restricted class of admissible e-variables.
This raises a fundamental question:
\emph{How does one construct growth-rate optimal e-variables for composite hypothesis testing problems 
under structural constraints?}
At first glance, the answer appears problem-specific.
One might expect that each constraint requires solving
a new problem,
ignoring the unconstrained optimal solution. This paper shows that, for various constraints,
the situation is far more structured.
Our central message is:
\emph{Constrained growth-rate optimal e-variables
are often obtained by first solving the unconstrained problem,
and then later imposing the constraint via an appropriate (deterministic or random) transformation.}

In this paper, we demonstrate this ``optimize-then-constrain'' phenomenon through the following four different types of constraints and provide log-optimal/GROW e-variables for composite testing problems in the presence of an LFD pair.

\begin{enumerate}
\item \textbf{Local differential privacy (LDP):} 
 LDP has become a key model
for privacy-preserving inference
\citep{duchi2023local,kairouz2016extremal}.
Under $\varepsilon$-LDP,
the observable output $Y$ is generated from $X$
through a randomized channel $Q(\cdot \mid X)$ satisfying
\[
\sup_{x,x'} \sup_{y}
\frac{Q(y \mid x)}{Q(y \mid x')} \le e^{\varepsilon}.
\]
The e-variable cannot depend directly on $X$,
but only on its privatized version $Y$.
\item \textbf{Quantization.} In distributed or decentralized systems \citep{tsitsiklis1989decentralized,veeravalli1993decentralized}, local nodes cannot transmit real-valued statistics with arbitrary precision. Instead, they can send only a finite number of bits.
In such settings, the e-variable must take values in a finite set. Mathematically, this forces the e-variable to be approximated by a step function.
\item \textbf{Boundedness.} In sequential decision-making, unbounded likelihood ratios can create instability. Extremely large e-variables may dominate aggregation rules, lead to numerical instability,
or amplify rare but extreme events.
Bounding e-variables is also essential 
in robustness
and global privacy-preserving mechanisms that limit sensitivity
\citep{dwork2014algorithmic}.  
Imposing e-variables to be bounded in some interval $[c_1,c_2]$ might be crucial in such settings.
\item \textbf{Bounded convex integral constraints.}
Another related constraint is to impose an upper bound on the variance
(or higher moments) under the null or alternative. We formalize this mathematically through a broad class of convex integral constraints.
\end{enumerate}
In all these settings, we will show that our ``optimize-then-constrain'' framework applies.

\subsection{Background and related work}
Recall the definition of e-variable from \eqref{def-e-val}. E-variables measure evidence against the null: the larger its realized value, the stronger
the evidence. One can also use e-variables to make hard accept/reject decisions. In particular, Markov’s inequality implies that we can reject $H_0$ at level $\alpha$ if an e-variable $E$ exceeds $1/\alpha$,
since $P[E\geq 1/\alpha]\leq \alpha$ for each $P\in H_0$. In this paper, however, our primary focus is rather on the optimality of $E$ itself, 
a level-$\alpha$ test can always be obtained by thresholding $E$ at $1/\alpha$.

In the simple-versus-simple setting $H_0=\{P_0\}$ and $H_1=\{P_1\}$, 
suppose $P_0$ and $P_1$ admit densities $p_0$ and $p_1$ with respect to a common dominating measure. 
Then the likelihood ratio $\frac{p_1(X)}{p_0(X)}$ is the log-optimal e-variable, that is, $ \mathbb E_{P_1}\left[\log\frac{p_1(X)}{p_0(X)}\right]\geq\mathbb E_{P_1}(\log B),$ for any e-variable $B$ under $P_0$
\citep{shafer2021testing}, a result with roots in \cite{kelly1956new,breiman1961optimal}. 
When the null and alternative are composite,
$H_0 = \mathcal{P}_0$ vs.\ $H_1 = \mathcal{P}_1,$
log-optimality is defined in a minimax sense. \citet{grunwald2020safe} introduced the notion of \emph{growth-rate optimal in worst case} (GROW) e-variables, which achieves the worst-case expected log-growth over $\mathcal{P}_1$ while maintaining validity over $\mathcal{P}_0$:
\[\sup_{E\in\mathcal{E}(\mathcal{P}_0)}\inf_{P_1\in\mathcal{P}_1}\mathbb{E}_{P_1}[\log E],\]
where $\mathcal{E}(\mathcal{P}_0)$ denotes the set of all e-values under $\mathcal{P}_0$.
A key concept in this setting is that of a least favorable distribution (LFD) pair, which we define next.
Following \citet{huber1965robust}, consider a test $\phi$ between $\mathcal{P}_0$ and $\mathcal{P}_1$, and define the  risk
\begin{align}
    \label{risk}
\nonumber    &  R(P_0, \phi) = C_0 P_{0}(\phi \text{ accepts } \mathcal{P}_1),  \\      ~~
       & R(P_1, \phi) = C_1 P_{1}(\phi \text{ accepts } \mathcal{P}_0),
\end{align}
where $C_0, C_1 > 0$ are fixed constants.
The formal definition of a least favorable distribution pair is given below.  
\begin{definition}
$(P_0^*, P_1^*) \in \mathcal{P}_0 \times \mathcal{P}_1$ is called \textit{least favorable distribution (LFD) pair} in terms of risk $R$ for testing $\mathcal{P}_0$ vs.\ $\mathcal{P}_1$, if for every likelihood ratio test $\phi$ between $P_0^*$ and $P_1^*$, $R(P_j^*,\phi) \geq R(P_j,\phi), ~\text{ for all } P_j \in \mathcal{P}_j,~ j=0,1.$
\end{definition}
Intuitively, an LFD pair makes the testing problem hardest: any test calibrated for $(P_0^*,P_1^*)$ performs at least as well (has smaller errors of both types)  against all other distributions in the respective classes. LFD pairs are known to exist for various parametric and nonparametric models, e.g., the monotone likelihood ratio (MLR) families, and Huber's robust models (including $\epsilon$ neighborhood model, total variation model, and more generally with 2-alternating capacities \citep{huber1973minimax}). 

More recently, \cite{saha2025huber} has characterized the GROW e-variable for composite testing in the presence of an LFD pair.
 If  $(P_0^*, P_1^*)$ is an LFD pair in terms of the risk defined in \eqref{risk} for testing $\mathcal{P}_0$ vs.\ $\mathcal{P}_1$, they show that
   $\frac{dP_{1}^*(X)}{dP_{0}^*(X)}$ is a GROW e-variable for testing $\mathcal{P}_0$ vs.\ $\mathcal{P}_1$. That is, if $\mathcal{E}(\mathcal{P}_0)$ denotes the set of all e-variables for $\mathcal{P}_0$, we have $\frac{dP_{1}^*(X)}{dP_{0}^*(X)}\in\mathcal{E}(\mathcal{P}_0)$ and 
  \begin{align*}
     & \sup_{E\in \mathcal{E}(\mathcal{P}_0)} \inf_{P_1\in \mathcal{P}_1}\mathbb{E}_{P_1}\!\left(\log E\right)=\inf_{P_1\in \mathcal{P}_1}\mathbb{E}_{P_1}\!\left(\log\tfrac{d P_1^*(X)}{d P_0^*(X)}\right)\\
     &= \kl(P_1^*,P_0^*)=\inf_{(P_0,P_1)\in\mathcal{P}_0 \times \mathcal{P}_1}\kl(P_1,P_0).
  \end{align*}
The results above establish that, in unconstrained settings, log-optimal e-variables are likelihood ratios between least favorable distributions (when they exist) in composite problems. 
The central question of the present paper is: 
\begin{quote}
\emph{What happens if we impose additional structural constraints on the e-variable, such as privacy, quantization, boundedness, or convex integral constraints (e.g., moment bounds)?}
\end{quote}

\subsection{Our contribution}
For various realistic constraint sets, we show that the constrained log-optimal e-variable is obtained by some post-processing of the unconstrained log-optimal e-variable.

Specifically, we derive exact, closed-form solutions for the log-optimal e-variable under four distinct constraint classes, revealing a shared backbone:
\begin{enumerate}
    \item \textbf{Local Differential Privacy:} An optimality result of \cite{kairouz2016extremal} for the binary mechanism focuses on high-privacy regimes (small $\epsilon$) for general alphabets, but in the binary case, the optimality holds universally for all $\epsilon$. As a consequence, we can derive the optimal LDP Kelly bet, summarized in \Cref{alg:kelly-bet}. We introduce a slightly different optimization problem and characterize its solution in \Cref{thm:opt-binary-simple}, which is at least as good as the binary mechanism of \cite{kairouz2016extremal}, for any $\epsilon$, $P$, and $Q$. Furthermore, we generalize this binary mechanism and its corresponding e-variable to the composite case in \Cref{thm:bin-opt-comp}, demonstrating that the optimal private e-variable remains a randomized post-processing of the unconstrained optimal e-variable. 
    \item \textbf{Boundedness:} We prove in \Cref{thm:bounded} that enforcing an almost-sure boundedness constraint $E \in [c_1, c_2]$ results in a simple deterministic truncation (clipping) of a suitably normalized likelihood ratio for simple vs. simple hypothesis.
    \item \textbf{Quantization:} Under the constraint that the e-variable can take only two distinct values, we show in \Cref{thm:quantization} that the optimal solution is a step-function defined entirely by a threshold on the likelihood ratio. 
    \item \textbf{Bounded convex integral constraints:} We analyze the optimal e-variable under a broad class of convex integral constraints, where the expected value of $\phi(E)$, for a convex function $\phi$ is bounded under the null hypothesis. We establish in \Cref{thm:convex-constraint} that the optimal restricted e-variable is a strictly monotone transform of the unconstrained likelihood ratio.
\end{enumerate}
A key structural insight emerges from these constraint classes: in the simple-vs-simple setting, the constrained growth-optimal e-variable is always a non-decreasing transformation of the likelihood ratio. In \Cref{thm:comp-gen}, we extend this principle to composite testing. Specifically, whenever a least favorable distribution (LFD) pair exists, and the simple-versus-simple optimizer is a non-decreasing transformation of the likelihood ratio, the constrained growth-optimal e-variable for the composite problem is likewise a transformation of the unconstrained optimal e-variable. While we state all the optimality results with respect to standard logarithmic utility, we provide proof arguments for \Cref{thm:bounded,thm:convex-constraint,thm:comp-gen} under a broader class of utility functions in \Cref{thm:bounded-gen,thm:convex-constraint-gen,thm:gen}.


\paragraph{Outline.}
The rest of the paper is organized as follows. In \Cref{sec:ldp}, we derive log-optimal e-variable under LDP constraint with a binary mechanism, first for the simple vs simple setting and then extend it to the composite problem with LFDs.
In \Cref{sec:bounded,sec:quantized,sec:moment}, we find out the log-optimal e-variable for the simple vs. simple setting under quantization, boundedness, and convex integral constraints, respectively. In \Cref{sec:composite},
we generalize the quantization, boundedness,
and convex integral constrained results to composite hypothesis testing problems with LFDs. We conclude in \Cref{sec:conc}, following a discussion in \Cref{sec:discussion} which provides a discussion on possible extensions and limitations. We conclude in \Cref{sec:conc}, preceded by \Cref{sec:discussion}, which outlines possible extensions and limitations of our framework. All theorem proofs are provided in the appendix.

\section{Local Differential Privacy Constraint}
\label{sec:ldp}
 Local differential privacy \citep{dwork2006calibrating,kasiviswanathan2011can,duchi2023local} is defined through a conditional probability distribution $Q(\cdot|x)$, which represents the mechanism that randomizes an input $x \in \mathcal{X}$ to an output $y \in \mathcal{Y}$. We say that a mechanism $\Q$ is
{\em $\varepsilon$-locally differentially private} if
\begin{eqnarray}
	\sup_{S\subset \cY, x,x' \in\cX} \frac{\Q(S|\x)}{\Q(S|\x')} \;\leq\; e^\varepsilon \;,
	\label{eq:defLDP}
\end{eqnarray}
where $\Q(S|x) := \prob(Y_i\in S | X_i=x)$ represents the privatization mechanism.
This ensures that for small values of $\varepsilon$, given a privatized data $Y_i$, it is (almost) equally likely to have come from
any data, i.e. $x$ or $x'$. A small value of $\varepsilon$ means that we require a high level of privacy and a large value corresponds to a low level of privacy.
LDP has become a standard model in privacy-preserving data collection, with diverse applications in surveys, federated systems, etc.

 We first consider simple hypotheses $P_0$ vs.\ $P_1$ under the LDP constraint. Assume that $P_0$ and $P_1$ have densities $p_0$ and $p_1$ with respect to some common dominating measure.
 We shall finally generalize it to the composite testing problem, where a least favourable distribution pair exists.

\cite{kairouz2016extremal,pensia2024simple} address a similar problem, but they only consider simple vs simple testing with distributions having finite support. And the optimality results in \cite{kairouz2016extremal} focus on high or low privacy regimes (small or large $\epsilon$) only.
More broadly, there is a substantial literature on LDP hypothesis testing, including goodness-of-fit testing \cite{gaboardi2016differentially,lam2022minimax}, two-sample testing \cite{mun2024minimax}. These works adopt minimax or Neyman-Pearson risk criteria,
measuring performance through Type I/II errors, and they do not provide the safe, anytime-valid guarantees in composite hypothesis testing problems, as enabled by the e-value framework.
Recent work by \citet{csillag2025differentially} introduces differentially private e-values under the \emph{global} differential privacy model, where a trusted curator perturbs aggregate statistics to guarantee privacy for the entire dataset. 
In contrast, the present paper studies e-values under \emph{local} differential privacy (LDP), a strictly stronger privacy model in which each data point is randomized independently at the source. To the best of our knowledge, our work is the first to characterize \emph{growth-rate optimal LDP e-values} under the constraint that the privatized output is binary for composite testing problems. 

We are interested in releasing a differentially private version of $\X$ represented by $\Y$. The e-value is then a function of the randomized output $Y$, not the raw data $X$. The random variable $\Y$
should preserve the information content of $\X$ as much as possible while
meeting the local differential privacy constraints.
 The output of the privatization mechanism $\Y$ is distributed according to the induced marginal $M^Q_i$ given by
\begin{eqnarray*}
	M_{P_i}^Q(y) &=& \int_{\x\in\cX} Q(y|\x)  dP_i(\x) , \text{ for } y\in \cY.
\end{eqnarray*}
For sufficiently small $\epsilon$ and when the distributions $P_0$ and $P_1$ have finite support,
\cite{kairouz2016extremal} characterizes the optimal solution to
\begin{equation}
\label{eq:opti}
\begin{aligned}
& \underset{\Q}{\text{maximize}}~\kl(M_{P_1}^Q \| M_{P_0}^Q), ~ \text{subject to}
~ \Q \in \mathcal{\D}_{\varepsilon},
\end{aligned}
\end{equation}
where $\mathcal{\D}_{\varepsilon}$ is the set of all $\varepsilon$-locally
differentially private mechanisms satisfying \eqref{eq:defLDP}.   Given the mechanism $Q\in\mathcal{\D}_{\varepsilon}$, and with $Y\sim  Q(.|X)$, we know that the log-optimal e-variable is the likelihood ratio of the alternative and null distribution of $Y$:
$E = \frac{dM_{P_1}^Q(Y)}{dM_{P_0}^Q(Y)}.$
Therefore, the 
 \textit{log-optimal $\varepsilon$-LDP e-variable} for testing $P_0$ vs. $P_1$ is defined as the likelihood ratio
\begin{equation}
    E = \frac{dM_{P_1}^{Q^*}(Y)}{dM_{P_0}^{Q^*}(Y)},
\end{equation}
where $Y\sim \Q^*(.|X)$, and $Q^*$ is a solution to \eqref{eq:opti}.
For a given $\PP_0$ and $\PP_1$, the {\em binary mechanism}  \citep{kairouz2016extremal} is defined as a staircase mechanism with only two outputs
$\Y\in\{0,1\}$ satisfying 
\begin{eqnarray}
\Q(0|x) \,=\, \left\{
\begin{array}{rl}
	\dfrac{e^\varepsilon}{1+e^\varepsilon}& \text{ if } p_0(x)\geq p_1(x)\;,\\
	\dfrac{1}{1+e^\varepsilon}& \text{ if } p_0(x)< p_1(x)\;,\\
\end{array}
\right.\;\;\;
\label{eq:defhypbin}
\end{eqnarray}
and $\Q(1|x) \,=\, 1-\Q(0|x).$
And they have
established that it is the optimal mechanism when a high level of privacy is
required. The following theorem is a direct consequence of Theorem 5 of \cite{kairouz2016extremal}.
\begin{theorem}
\label{thm:hypbin}
	For any $\PP_0$ and $\PP_1$ with finite support,
	there exists a positive $\varepsilon^*$ that depends on  $\PP_0$ and $\PP_1$ such that
	the binary mechanism $Q$ defined in \eqref{eq:defhypbin} solves \eqref{eq:opti}, i.e., it maximizes the KL-divergence between the induced marginals over all
	$\varepsilon$-LDP mechanisms. Therefore, the log-optimal $\varepsilon$-LDP e-variable for $\PP_0$ vs.\ $\PP_1$, when $\varepsilon<\varepsilon^*$, is given by
    \begin{equation*}
        \label{opt-evalue}
        E=\frac{dM_{P_1}^Q(Y)}{dM_{P_0}^Q(Y)}=\begin{cases}
            & \frac{e^\epsilon P_1(T) + (1-P_1(T))}{ e^\epsilon ( 1-P_0(T))+ P_0(T)} ,\text{ if } Y=0,\\
            &  \frac{ e^\epsilon P_1(T^c) + (1-P_1(T^c))}{ e^\epsilon ( 1-P_0(T^c))+ P_0(T^c)} ,\text{ if } Y=1,
        \end{cases}
    \end{equation*}
    where $Y\sim \Q(.|x)$ and $T=\{x\in\mathcal{X}:p_0(\x)\geq p_1(\x)\}.$
\end{theorem}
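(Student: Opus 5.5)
The proof has two parts. First, we establish optimality of the binary mechanism for small $\varepsilon$ by importing Theorem 5 of \cite{kairouz2016extremal}. Second, we compute the induced marginals explicitly to obtain the likelihood-ratio e-variable.

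\textbf{Step 1 (optimal mechanism).} We would recall the setting of \cite{kairouz2016extremal}. For finite alphabets $\mathcal{X}$, the objective $\kl(M_{P_1}^Q\|M_{P_0}^Q)$ is a sublinear functional of $Q$. They show that the maximum over $\mathcal{D}_\varepsilon$ is attained at a staircase mechanism, which is an extreme point of a finite-dimensional polytope. Their Theorem 5 states that for $\varepsilon$ below a threshold $\varepsilon^*$ depending on $(P_0,P_1)$, the optimal staircase mechanism is the binary one in \eqref{eq:defhypbin}. Heuristically, this is because for small $\varepsilon$ the KL objective is to leading order $\tfrac{1}{2}\chi^2$-like in $\varepsilon$, i.e.\ proportional to $\big(\sum_x |p_1(x)-p_0(x)|\big)^2$ up to $O(\varepsilon^3)$, and the binary mechanism attains this total-variation term exactly. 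We would check that the hypotheses of their theorem (finite support, the KL utility being of the required sum-of-sublinear form, with $\mu(\cdot)$ given by $M_{P_1}\log(M_{P_1}/M_{P_0})$) are met, and conclude that $Q$ in \eqref{eq:defhypbin} solves \eqref{eq:opti} for $\varepsilon<\varepsilon^*$. This is the main obstacle. It is entirely a matter of correctly matching our objective and constraint set to theirs, including the direction of KL and the labelling of outputs; nothing new is proved here.

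\textbf{Step 2 (log-optimality of the likelihood ratio).} Given the optimal $Q$, the observation $Y$ has law $M_{P_0}^Q$ under the null and $M_{P_1}^Q$ under the alternative. By the simple-vs-simple Kelly result \citep{shafer2021testing} recalled in the introduction, $dM_{P_1}^Q/dM_{P_0}^Q(Y)$ is the log-optimal e-variable among functions of $Y$. Combined with Step 1, and since the achieved growth equals $\kl(M_{P_1}^Q\|M_{P_0}^Q)$, this gives optimality jointly over mechanisms and e-variables. This is exactly the definition of the log-optimal $\varepsilon$-LDP e-variable.

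\textbf{Step 3 (explicit formula).} We would compute, with $T=\{p_0\ge p_1\}$,
\begin{equation*}
M_{P_i}^Q(0)=\frac{e^\varepsilon P_i(T)+(1-P_i(T))}{1+e^\varepsilon},\qquad M_{P_i}^Q(1)=\frac{e^\varepsilon P_i(T^c)+(1-P_i(T^c))}{1+e^\varepsilon}.
\end{equation*}
The factor $1+e^\varepsilon$ cancels in the ratio, which yields the two cases of the displayed e-variable. Finally, we would verify directly that $\mathbb{E}_{P_0}[E]=M_{P_1}^Q(0)+M_{P_1}^Q(1)=1$, as a sanity check on validity. The denominators should be written in the same form as the numerators with $P_0$ in place of $P_1$, so we would reconcile the notation of the displayed denominator with this computation.
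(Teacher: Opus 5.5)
Your proposal takes the same route as the paper. The paper offers no argument beyond invoking Theorem 5 of \cite{kairouz2016extremal}, which is stated there for arbitrary $f$-divergences (so the KL-direction check is automatic), and then taking the likelihood ratio of the induced Bernoulli marginals. Your Step 3 computation is the correct one, and you should assert it rather than ``reconcile notation'': the displayed denominators have $T$ and $T^c$ swapped, so as printed they are $(1+e^\varepsilon)M_{P_0}^Q(1)$ and $(1+e^\varepsilon)M_{P_0}^Q(0)$, and the printed $E$ is not an e-variable in general (e.g.\ for $p_0=(0.9,0.1)$, $p_1=(0.5,0.5)$ it gives $\mathbb{E}_{P_0}[E]=\tfrac12(a/b+b/a)>1$ for every $\varepsilon>0$, where $a=M_{P_0}^Q(0)\neq b=M_{P_0}^Q(1)$), whereas the correct denominators are $e^\varepsilon P_0(T)+(1-P_0(T))$ and $e^\varepsilon P_0(T^c)+(1-P_0(T^c))$.
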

While the above optimality result from \cite{kairouz2016extremal} for the binary mechanism focuses on high-privacy regimes (small $\epsilon$) for general finite alphabets, next we show that for the binary case, the optimality holds universally for all $\epsilon$. As a consequence, we derive the log-optimal LDP Kelly bet for testing a fair coin against a biased coin with $q>1/2$, summarized in \Cref{alg:kelly-bet}. Classical Kelly betting \citep{kelly1956new} chooses $(2q-1)$ fraction of wealth that maximizes the expected logarithmic growth rate of capital under the alternative. Under local differential privacy, the bettor does not observe the true outcome but only a randomized response version of it. The optimal strategy, therefore, adjusts the betting fraction to account precisely for the privacy-induced loss of information.
At each round, the raw Bernoulli observation is privatized via randomized response, and the skeptic updates wealth using a privacy-adjusted betting fraction $f^* =(2q-1) \cdot \frac{e^\epsilon - 1}{e^\epsilon + 1}$.
Relative to the unconstrained Kelly strategy , the optimal betting fraction under $\varepsilon$-LDP is attenuated by the multiplicative factor $\frac{e^\epsilon - 1}{e^\epsilon + 1}$.
\begin{theorem}
\label{thm:kelly-bet}
Let  $P_0$ and $P_1$ are Ber$(p_0)$ and Ber$(p_1)$ respectively. Then, for any privacy budget $\epsilon > 0$, the binary mechanism $Q$ defined as:
\begin{equation}
\label{eq:mech-bernoullli}
    Q(y|x) = \begin{cases} 
    \frac{e^\epsilon}{1+e^\epsilon} & \text{if } y=x \\
    \frac{1}{1+e^\epsilon} & \text{if } y \ne x
    \end{cases}
\end{equation}
 solves \eqref{eq:opti}. Therefore, the e-variable in \eqref{eq:dp-kelly-bet} is the log-optimal $\varepsilon$-LDP e-variable for testing Ber$(1/2)$ against Ber$(q)$.
\end{theorem}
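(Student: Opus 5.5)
We need to show that randomized response maximizes $\kl(M_{P_1}^Q\|M_{P_0}^Q)$ over all $\varepsilon$-LDP mechanisms when $\mathcal X=\{0,1\}$. Throughout, assume without loss of generality that $p_1>p_0$; the case $p_1<p_0$ follows by relabeling, and $p_1=p_0$ is trivial. The proof has three steps.

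\textbf{Step 1: dominance over every LDP channel.} Let $Q$ be any $\varepsilon$-LDP mechanism on $\{0,1\}$. Let $R$ denote randomized response in \eqref{eq:mech-bernoullli}, and write $\lambda=e^\varepsilon/(1+e^\varepsilon)$. For each output $y$, the LDP condition forces $Q(y|1)/Q(y|0)\in[e^{-\varepsilon},e^{\varepsilon}]$. We will show that $Q$ is a garbling of $R$, i.e.\ that $Q=W\circ R$ for some Markov kernel $W$ from $\{0,1\}$ to $\mathcal Y$. The data-processing inequality for KL then gives
\[
\kl(M_{P_1}^Q\|M_{P_0}^Q)\le \kl(M_{P_1}^R\|M_{P_0}^R).
\]
Solving $W\circ R=Q$ leads to the explicit candidate
\[
W(\cdot\mid z)=\frac{\lambda Q(\cdot|z)-(1-\lambda)Q(\cdot|1-z)}{2\lambda-1},\qquad z\in\{0,1\}.
\]
Each $W(\cdot\mid z)$ has total mass $1$. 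Its nonnegativity, setwise, is exactly the condition $Q(S|z)\ge e^{-\varepsilon}Q(S|1-z)$, which is the LDP condition \eqref{eq:defLDP}. A direct computation then verifies that $\lambda W(\cdot|x)+(1-\lambda)W(\cdot|1-x)=Q(\cdot|x)$.

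The main step to check is this nonnegativity: the two-point geometry of binary inputs is precisely what makes the LDP constraint \emph{equivalent} to being a post-processing of $R$. This equivalence fails for larger alphabets, which is why \Cref{thm:hypbin} needs small $\varepsilon$. Because the garbling argument works setwise, it handles arbitrary output spaces $\mathcal Y$ and avoids any measurability issues with densities.

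\textbf{Step 2: optimality and identification with the paper's binary mechanism.} Since $R$ itself is $\varepsilon$-LDP, Step 1 shows that $R$ attains the maximum in \eqref{eq:opti}. With $p_1>p_0$, the set $T$ of \Cref{thm:hypbin} is $\{0\}$, so $R$ coincides with the binary mechanism \eqref{eq:defhypbin}.

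\textbf{Step 3: the e-variable.} Under $R$, the output $Y$ is Bernoulli with parameter $\lambda p+(1-\lambda)(1-p)$ under $\mathrm{Ber}(p)$. Under $\mathrm{Ber}(1/2)$ this parameter equals $1/2$. Hence the log-optimal e-variable, which is the likelihood ratio of $Y$ (as in the discussion after \eqref{eq:opti}), is
\[
E=1+(2Y-1)(2q-1)\,\frac{e^\varepsilon-1}{e^\varepsilon+1}.
\]
This is the Kelly bet with attenuated fraction $f^*$, which is presumably the expression in \eqref{eq:dp-kelly-bet}.
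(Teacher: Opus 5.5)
Your proof is correct, and it takes a different route from the paper's.

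\textbf{The paper's route.} The paper builds no dominance argument. It checks that KL is a sublinear utility of the mechanism columns and invokes Theorem 2 of \cite{kairouz2016extremal}. That theorem restricts attention to staircase mechanisms with $|\mathcal Y|\le|\mathcal X|=2$ and likelihood ratios in $\{1,e^\varepsilon\}$. The paper then asserts that the ratio must equal $e^\varepsilon$ and solves the two normalization equations to recover randomized response.

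\textbf{Your route.} You show directly that every $\varepsilon$-LDP channel on a binary input is a garbling of randomized response $R$. The explicit kernel $W$ does this, and its setwise nonnegativity is exactly the inequality $Q(S\mid 1-z)\le e^\varepsilon Q(S\mid z)$. You then apply the data-processing inequality.

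\textbf{What your approach buys.} It is self-contained and needs no reduction to finite output alphabets. It also proves more: $R$ Blackwell-dominates the entire class. Hence $R$ is simultaneously optimal for every $f$-divergence, indeed for any criterion monotone under post-processing, and for every pair $(p_0,p_1)$. This explains why the optimizer does not depend on $(p_0,p_1)$, and why your assumption $p_1>p_0$ is needed only to match $R$ with \eqref{eq:defhypbin}, not for optimality itself. It also removes the need for the paper's step that the staircase ratio must be $e^\varepsilon$ rather than $1$. That step is easy, since ratio $1$ makes a binary-output channel uninformative, but the paper only asserts it.

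\textbf{What the paper's approach buys.} It sits inside a general extremal theorem valid for any finite input alphabet, the same framework as \Cref{thm:hypbin}. Your garbling characterization is specific to $|\mathcal X|=2$, as you note.

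\textbf{The e-variable.} Your Step 3 formula $E=1+(2Y-1)(2q-1)\frac{e^\varepsilon-1}{e^\varepsilon+1}$ is exactly \eqref{eq:dp-kelly-bet} with $f^*=(2q-1)\frac{e^\varepsilon-1}{e^\varepsilon+1}$, so that step needs no hedging.
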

\begin{algorithm}[!h]
\caption{The LDP Kelly Bet}
\begin{algorithmic}[1]
\State \textbf{Input:} $q > 1/2$, Privacy budget $\epsilon$.
\State \textbf{Initialize:} Skeptic's initial capital $W_0 \gets 1$.
\State \textbf{Calculate bet fraction:} 
\[
f^* \gets (2q-1) \cdot \frac{e^\epsilon - 1}{e^\epsilon + 1}
\]
\Comment{Optimal Kelly bet adjusted for privacy constraint}

\For{$t = 1, 2, \dots$}
    \State \textbf{Reality:} Nature draws $X_t \in \{0, 1\}$.
    \State \quad Under $H_0$: $X_t \sim \text{Ber}(1/2)$.
    \State \quad Under $H_1$: $X_t \sim \text{Ber}(q)$.
    
    \State \textbf{Mechanism:} Release privatized bit $Y_t$.
    \[
    Y_t = 
    \begin{cases} 
    X_t & \text{w.p. } \frac{e^\epsilon}{1+e^\epsilon} \\
    1 - X_t & \text{w.p. } \frac{1}{1+e^\epsilon}
    \end{cases}
    \]
    
    \State \textbf{Update Wealth:} Skeptic updates capital based on outcome $Y_t$.
    \begin{equation}
    \label{eq:dp-kelly-bet}
        e(Y_t) = 1 + f^* \cdot (2Y_t - 1),
    \end{equation}
    \[
    W_t \gets W_{t-1} \times e(Y_t).
    \]
\EndFor
\end{algorithmic}
\label{alg:kelly-bet}
\end{algorithm}
The binary mechanism is a popular choice for many applications, yet it is not optimal for general $\epsilon$, $P$, and $Q$ when the inputs are not Bernoulli. It is also unknown how small $\epsilon$ should be to achieve optimality. Therefore, in this paper, we focus on characterizing the optimal solution to
\begin{equation}
\label{eq:opti-binary}
\begin{aligned}
& \underset{\Q}{\text{maximize}}~ \kl(M_{P_1}^Q \| M_{P_0}^Q),~\text{subject to}~ \Q \in \mathcal{\D}_{\varepsilon}^{(2)},
\end{aligned}
\end{equation}
where $\mathcal{\D}_{\varepsilon}^{(2)}=\mathcal{\D}_{\varepsilon}\cap\{Q:\text{support}(Q)=\{0,1\}\}$ is the set of all $\varepsilon$-locally
differentially private mechanisms satisfying \eqref{eq:defLDP} that output only binary values.
 The \textit{log-optimal $\varepsilon$-LDP e-variable}  for testing $P_0$ vs. $P_1$, under the additional constraint that $\text{support}(Q)=\{0,1\}$ is defined as
\begin{equation}
    E = \frac{dM_{P_1}^{Q^*}(Y)}{dM_{P_0}^{Q^*}(Y)},
\end{equation}
where $Y\sim \Q^*(.|X)$, and $Q^*$ is a solution to \eqref{eq:opti-binary}.
Note that, by definition, the solution to \eqref{eq:opti-binary} is at least as good as the binary mechanism of \cite{kairouz2016extremal} defined in \eqref{eq:defhypbin}, for any $\epsilon$, $P$, and $Q$.

\subsection{Optimal $\epsilon$-LDP Binary Mechanism}
The objective is to maximize the KL divergence between the induced output distributions under $H_1$ and $H_0$. Let $m_{P_1}^Q$ and $m_{P_0}^Q$ denote the probability of outputting $1$ under $H_1$ and $H_0$, respectively:
\[ m_{P_1}^Q = \int_{x\in\mathcal{X}} Q(1| x) dP_1(x),~ m_{P_0}^Q = \int_{x\in\mathcal{X}} Q(1| x) dP_0(x).\]
Since we are restricted to $\Q \in \mathcal{\D}_{\varepsilon}^{(2)}$, the marginal distributions of $Y$ under the null and alternative are $M_{P_0}^Q=\text{Ber}( m_{P_0}^Q)$ and $M_{P_1}^Q=\text{Ber}( m_{P_1}^Q)$ respectively.
Under the constraint \eqref{eq:defLDP}, we want to maximize $\kl(M_{P_1}^Q \| M_{P_0}^Q)$, which can be written as a function of $Q$:
\[ J(Q) = m_{P_1}^Q\log \frac{m_{P_1}^Q}{m_{P_0}^Q} + (1 - m_{P_1}^Q) \log \frac{1 - m_{P_1}^Q}{1 - m_{P_0}^Q}.\]
 In other words, for arbitrary $P_0,P_1$ and $\epsilon\in(0,\infty),$ we are interested in characterizing the optimal solution to \eqref{eq:opti-binary}.
\begin{theorem}
\label{thm:opt-binary-simple}
For any distributions $P_0, P_1$ and any privacy budget $\epsilon > 0$, the binary mechanism 
\begin{equation}
\label{eq:binary-opt-simple}
    Q(1\mid x) = \begin{cases} 
\frac{e^\epsilon}{e^\epsilon + 1} & \text{if } \frac{dP_1(x)}{dP_0(x)} > t \\
\frac{1}{e^\epsilon + 1} & \text{if } \frac{dP_1(x)}{dP_0(x)} \le t \end{cases}
\end{equation}
solves \eqref{eq:opti-binary}, i.e., it maximizes the KL-divergence between the induced marginals over all
	$\varepsilon$-LDP \textit{binary} mechanisms, where the threshold $t$ is the solution to 
\[
t=\frac{m_{P_1}^{Q} - m_{P_0}^{Q}}{m_{P_0}^{Q}(1-m_{P_0}^{Q})\log\left( \frac{m_{P_1}^{Q}(1-m_{P_0}^{Q})}{m_{P_0}^{Q}(1-m_{P_1}^{Q})}\right)}.
\]
Therefore, the log-optimal e-variable under the same constraints is
  \begin{equation}
        \label{opt-evalue-simple}
        E=\frac{dM_{P_1}^Q(Y)}{dM_{P_0}^Q(Y)}=\begin{cases}
            & v_0 ,~~\text{ if } Y=0,\\
            & v_1 ,~~\text{ if } Y=1,
        \end{cases}
    \end{equation}
    where $Y\sim \Q(.|x)$, $v_0=\frac{ e^\epsilon P_1(T_{\text{opt}}) + (1-P_1(T_{\text{opt}}))}{ e^\epsilon ( 1-P_0(T_{\text{opt}}))+ P_0(T_{\text{opt}})}$, $v_1=\frac{ e^\epsilon P_1(T^c_{\text{opt}}) + (1-P_1(T^c_{\text{opt}}))}{ e^\epsilon ( 1-P_0(T^c_{\text{opt}}))+ P_0(T^c_{\text{opt}})}$,  and $T_{\text{opt}}=\{x\in\mathcal{X}:\frac{dP_1(x)}{dP_0(x)} \le t\},$ with $\Q(.|x)$ and $t$ as defined above.
\end{theorem}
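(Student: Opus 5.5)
The plan is to reduce the problem to a finite-dimensional convex maximization over the achievable pairs $(m_{P_0}^Q,m_{P_1}^Q)$. After that, a supporting-hyperplane (first-order) argument forces the optimal mechanism to have a likelihood-ratio threshold form, and a garbling argument fixes the two output levels at $\frac{1}{1+e^\epsilon}$ and $\frac{e^\epsilon}{1+e^\epsilon}$.

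\textbf{Setting up the convex problem.} Write $q(x)=Q(1\mid x)$. A binary mechanism is $\varepsilon$-LDP iff $q(x)\le e^\epsilon q(x')$ and $1-q(x)\le e^\epsilon(1-q(x'))$ for all $x,x'$. The set $\mathcal{F}$ of such $q$ is convex, since both constraints are linear in $q$. It is also weak-$*$ compact in $L^\infty(\mu)$, where $\mu=P_0+P_1$.

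The objective depends on $q$ only through the linear map $q\mapsto (m_0,m_1)=(\int q\,dP_0,\int q\,dP_1)$. Hence the image $\mathcal{M}\subset[0,1]^2$ is a compact convex set, and a maximizer exists. The function $J(m_0,m_1)=\kl(\mathrm{Ber}(m_1)\|\mathrm{Ber}(m_0))$ is jointly convex.

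\textbf{First-order characterization.} Let $q^*$ be a maximizer with $(m_0^*,m_1^*)$. By relabeling outputs $0\leftrightarrow 1$ if needed, assume $m_1^*> m_0^*$; if $m_1^*=m_0^*$ then $J=0$, which is trivially beaten unless $P_0=P_1$. By convexity of $J$,
\[
J(m)\ge J(m^*)+\nabla J(m^*)\cdot(m-m^*).
\]
So if some feasible $q$ increased the linear functional $\nabla J(m^*)\cdot m$, a point on the segment toward it would give a larger $J$. Thus $q^*$ must also maximize the linear functional $\int q\,(g_1 p_1+g_0p_0)\,d\mu$ over $\mathcal{F}$. The partial derivatives are
\[
g_1=\partial_{m_1}J=\log\frac{m_1(1-m_0)}{m_0(1-m_1)}>0,\qquad g_0=\partial_{m_0}J=-\frac{m_1-m_0}{m_0(1-m_0)}.
\]

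The main obstacle is this step. Strictly, the tangent inequality only says that $J$ grows along a direction that increases the linear functional at first order; the argument must show that $q^*$ is a maximizer of the linear functional and not merely stationary. I would handle this directly: if $\tilde q$ strictly increases the linear functional, then for small $\lambda$ the mixture $q^*+\lambda(\tilde q-q^*)\in\mathcal F$ satisfies $J>J(m^*)$ by differentiability, contradicting optimality.

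Consider now $q$ with $\inf q=\ell$ and $\sup q=u$ fixed. The linear functional is maximized by $q=u$ where $g_1p_1+g_0p_0>0$ and $q=\ell$ otherwise. This is exactly the condition $\frac{dP_1}{dP_0}(x)>t$ with $t=-g_0/g_1$, which is the displayed fixed-point formula for $t$. Ties on $\{dP_1/dP_0=t\}$ do not change the value, and I would place them in $T_{\text{opt}}$ by convention.

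\textbf{Fixing the two levels.} With $A=\{dP_1/dP_0>t\}$, the output $Y$ is obtained by passing $Z=\mathbf 1_A(X)$ through a binary channel with $P(Y=1\mid Z=1)=u$ and $P(Y=1\mid Z=0)=\ell$. This channel satisfies $u/\ell\le e^\epsilon$ and $(1-\ell)/(1-u)\le e^\epsilon$. I would then show, as in \Cref{thm:kelly-bet}, that every such channel is a garbling of randomized response with $\ell=\frac1{1+e^\epsilon}$ and $u=\frac{e^\epsilon}{1+e^\epsilon}$. Concretely, solve for a $2\times2$ stochastic post-processing matrix and check that its entries are nonnegative using the two ratio constraints. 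By the data-processing inequality for KL, randomized response is then optimal. This gives \eqref{eq:binary-opt-simple}.

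\textbf{The e-variable.} Finally, the induced marginals are $m_1=\frac{e^\epsilon P_1(A)+P_1(A^c)}{1+e^\epsilon}$ and similarly $m_0$. The log-optimal e-variable for the simple Bernoulli pair $M_{P_0}^Q$ vs.\ $M_{P_1}^Q$ is their likelihood ratio. Substituting these expressions and simplifying gives $v_0$ and $v_1$ in \eqref{opt-evalue-simple}.
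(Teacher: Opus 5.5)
Your core argument matches the paper's. Both use convexity of $J$ in $(m_0,m_1)$ and a first-order (Gateaux) condition at a maximizer $q^*$. Both then apply the pointwise bang-bang rule on the sign of $g_1L+g_0$, which gives the threshold $t=-g_0/g_1$.

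The difference is in what surrounds that step, and there your version is more complete. The paper's proof has four weak points:
\begin{itemize}
\item It works in a box $[y_{\min},y_{\max}]$ built from one mechanism, then treats that box as the feasible set for all mechanisms.
\item It asserts $m_1>m_0$ for every non-trivial mechanism.
\item It takes existence of $q^*$ for granted.
\item It stops at ``$y_{\max}$ on $\{L>t\}$, $y_{\min}$ on $\{L\le t\}$'' without ever showing $y_{\max}=\frac{e^\epsilon}{1+e^\epsilon}$ and $y_{\min}=\frac{1}{1+e^\epsilon}$.
\end{itemize}
You take the levels to be the essential infimum and supremum of $q^*$, so the box genuinely lies inside the feasible set. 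You get existence from weak-$*$ compactness and the orientation by relabeling. Most importantly, you add the garbling/data-processing step, which is exactly what pins the levels to randomized response.

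That step is sound. With $a=\frac{e^\epsilon}{1+e^\epsilon}$, the post-processing matrix $R^{-1}W$ has nonnegative entries by the two ratio constraints together with $\ell\le u$. A shorter alternative exists: for fixed $A$, $J$ is convex in $(\ell,u)$, and the feasible levels form a triangle with vertices $(0,0)$, $(1,1)$ and $(\frac{1}{1+e^\epsilon},\frac{e^\epsilon}{1+e^\epsilon})$, where the first two give $J=0$. Your garbling route buys the stronger fact that randomized response dominates every binary $\epsilon$-LDP channel on $Z$, for any divergence obeying data processing.

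Two things to tidy.

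First, as written your $t$ is evaluated at the marginals of the original $q^*$. You have not shown these equal the marginals $m^Q$ of the final randomized-response mechanism, which is where the statement's fixed-point equation is evaluated. That mechanism is itself a maximizer, so rerunning your first-order argument at it closes the gap. The one caveat is how ties at an atom of $L$ are placed: this can shift $m^Q$, and the statement also glosses over it.

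Second, carrying out your last substitution gives
$v_1=m_{P_1}^{Q}/m_{P_0}^{Q}=\frac{e^\epsilon P_1(T^c_{\text{opt}})+P_1(T_{\text{opt}})}{e^\epsilon P_0(T^c_{\text{opt}})+P_0(T_{\text{opt}})}$ and $v_0=(1-m_{P_1}^{Q})/(1-m_{P_0}^{Q})$. The displayed denominators are swapped: they equal $(1+e^\epsilon)(1-m_{P_0}^{Q})$ for $v_1$ and $(1+e^\epsilon)m_{P_0}^{Q}$ for $v_0$. So state the ratio you actually obtain, rather than asserting that it simplifies to the display.
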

We remark that the proof of the above result is nontrivial and fundamentally different from the techniques developed in \cite{kairouz2016extremal}. Their analysis relies heavily on combinatorial and extremal arguments tailored to finite output alphabets and unconstrained optimization. In contrast, our setting introduces an additional structural constraint and applies to arbitrary distributions $P_0, P_1$ and $\epsilon > 0$.

Solving for the optimal threshold $t$ exactly is generally not possible in closed form for arbitrary distributions $P_0$ and $P_1$. However, one can solve it easily using numerical iteration (fixed-point iteration). Note that the log-optimal $\varepsilon-$LDP e-variable $E$ under the additional constraint that $\text{support}(Q)=\{0,1\}$ (i.e., $Y$ can take values 0 and 1 only), defined in \eqref{opt-evalue-simple}, can be written as a function of the unconstrained log-optimal e-variable, $L(X)=\frac{dP_1(X)}{dP_0(X)}$ and an independent uniform random variable $U$:
\begin{align*}
    E&= v_0\Biggl(\mathds{1}\left(L(X)> t, U <\dfrac{e^\varepsilon}{1+e^\varepsilon}\right)\\
    &+\mathds{1}\left(L(X)\leq t, U <\dfrac{1}{1+e^\varepsilon}\right)\Biggr)  \\
    &+v_1\Biggl(\mathds{1}\left(L(X)> t, U\geq\dfrac{1}{1+e^\varepsilon}\right)\\
    &+\mathds{1}\left(L(X)\leq t, U \geq\dfrac{e^\varepsilon}{1+e^\varepsilon}\right)\Biggr).
\end{align*}
\subsection{Extension to composite nulls and alternatives}
We now generalize the result to the composite testing problem. Suppose $(P_0^*, P_1^*)$ is a least favorable distribution (LFD) pair in terms of the risk defined in \eqref{risk} for testing $\mathcal{P}_0$ vs.\ $\mathcal{P}_1$. 
We want to characterize the optimal solution to
\begin{equation}
\label{eq:opti-binary-lfd}
\begin{aligned}
& \underset{\Q \in \mathcal{\D}_{\varepsilon}^{(2)}}{\text{maximize}} 
& & \underset{{(P_0, P_1)\in\mathcal{P}_0\times\mathcal{P}_1}}{\text{minimize}}\kl(M_{P_1}^Q \| M_{P_0}^Q),
\end{aligned}
\end{equation}
where $\mathcal{\D}_{\varepsilon}^{(2)}=\mathcal{\D}_{\varepsilon}\cap\{Q:\text{support}(Q)=\{0,1\}\}$ is the set of all $\varepsilon$-locally
differentially private mechanisms satisfying \eqref{eq:defLDP} that output only binary values.

Now consider the binary mechanism in \eqref{eq:binary-opt-simple} with the LFD pair as follows.
\begin{eqnarray}
\Q^*(0|x) \,=\, \left\{
\begin{array}{rl}
	\dfrac{e^\varepsilon}{1+e^\varepsilon}& \text{ if } \frac{dP_1^*(x)}{dP_0^*(x)} > t^*\;,\\
	\dfrac{1}{1+e^\varepsilon}& \text{ if } \frac{dP_1^*(x)}{dP_0^*(x)} \leq t^*\;,\\
\end{array}
\right.\;\;\;
\label{eq:eq:binary-opt-lfd}
\end{eqnarray}
and $\Q^*(1|x)=1-\Q^*(0|x),$
where the threshold $t^*$ is the solution to 
\begin{equation}
\label{eq:opt-threshold-comp}
    t^*=\frac{m_{P_1}^{Q^*} - m_{P_0}^{Q^*}}{m_{P_0}^{Q^*}(1-m_{P_0}^{Q^*})\log\left( \frac{m_{P_1}^{Q^*}(1-m_{P_0}^{Q^*})}{m_{P_0}^{Q^*}(1-m_{P_1}^{Q^*})}\right)}.
\end{equation}
Analogously define the values $v^*_0=\frac{ e^\epsilon P_1^*(T^*) + (1-P_1^*(T^*))}{ e^\epsilon ( 1-P_0^*(T^*))+ P_0^*(T^*)}$, $v_1=\frac{ e^\epsilon  (1-P_1^*(T^*))+P_1^*(T^*)}{ e^\epsilon  P_0^*(T^*)+( 1-P_0^*(T^*))}$ and
\begin{equation}
        \label{opt-evalue-comp}
        E^*=\frac{dM_{P_1^*}^{Q^*}(Y)}{dM_{P_0^*}^{Q^*}(Y)}=\begin{cases}
            & v^*_0 ,\text{ if } Y=0,\\
            & v^*_1 ,\text{ if } Y=1,
        \end{cases}
    \end{equation}
     where $Y\sim \Q^*(.|x)$ satisfies $\epsilon$-LDP constraint, and $T^*=\{x\in\mathcal{X}:\frac{dP_1^*(x)}{dP_0^*(x)} \leq t^*\}$.
\begin{theorem}
\label{thm:bin-opt-comp}
Suppose $(P_0^*, P_1^*)$ is a least favorable distribution (LFD) pair in terms of the risk defined in \eqref{risk} for testing $\mathcal{P}_0$ vs.\ $\mathcal{P}_1$. Then, $\Q^* \in \mathcal{\D}_{\varepsilon}^{(2)}$ and $(P_0^*, P_1^*)\in\mathcal{P}_0\times\mathcal{P}_1$ solves \eqref{eq:opti-binary-lfd}, i.e.,
\begin{align*}
     & \sup_{Q \in \mathcal{D}_{\epsilon}^{(2)}} \inf_{\substack{P_1 \in \mathcal{P}_1 \\ P_0 \in \mathcal{P}_0}} D_{KL}(M_{P_1}^{Q} \| M_{P_0}^{Q}) = D_{KL}(M_{P_1^*}^{Q^*} \| M_{P_0^*}^{Q^*})\\
      &= \sup_{Q \in \mathcal{D}_{\epsilon}^{(2)}} D_{KL}(M_{P_1^*}^{Q} \| M_{P_0^*}^{Q}).
\end{align*}
Moreover, $E^*$ defined in \eqref{opt-evalue-comp}
   is an e-variable under the induced composite null $\mathcal{P}_0^{Q^*}=\{M_{P_0}^{Q^*}:P_0\in \mathcal{P}_0\}$ and it is log-optimal against the induced composite alternative $\mathcal{P}_1^{Q^*}=\{M_{P_1}^{Q^*}:P_1\in \mathcal{P}_1\}$:
  \begin{align*}
     & \sup_{E\in \mathcal{E}(\mathcal{P}_0^{Q^*})} \inf_{M\in\mathcal{P}_1^{Q^*}}\mathbb{E}_{M}\!\left(\log E\right)=\inf_{M\in\mathcal{P}_1^{Q^*}}\mathbb{E}_{M}\!\left(\log E^*\right)\\
     &=\kl(M_{P_1^*}^{\Q^*},M_{P_0^*}^{\Q^*}),
  \end{align*}
  where $ \mathcal{E}(\mathcal{P}_0^{Q^*})$ denotes the set of all e-variables for $\mathcal{P}_0^{Q^*}$  and $Q^*$ is as defined in \eqref{eq:eq:binary-opt-lfd}.
\end{theorem}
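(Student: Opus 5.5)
The plan is to exhibit a saddle point. It suffices to show that for every $(P_0,P_1)\in\mathcal{P}_0\times\mathcal{P}_1$,
\[
\kl(M_{P_1}^{Q^*}\|M_{P_0}^{Q^*})\ \ge\ \kl(M_{P_1^*}^{Q^*}\|M_{P_0^*}^{Q^*}).
\]
Combined with \Cref{thm:opt-binary-simple} applied to $(P_0^*,P_1^*)$, this gives the chain
\[
\sup_Q\inf_{P_0,P_1}\kl \ \ge\ \inf_{P_0,P_1}\kl(\cdot^{Q^*})\ =\ \kl(M_{P_1^*}^{Q^*}\|M_{P_0^*}^{Q^*})\ =\ \sup_Q \kl(M_{P_1^*}^{Q}\|M_{P_0^*}^{Q})\ \ge\ \sup_Q\inf_{P_0,P_1}\kl.
\]
So all the quantities coincide, which is the first display of the theorem. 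That $Q^*\in\mathcal{D}_\varepsilon^{(2)}$ is immediate, since the ratio of the two values of $Q^*(y|\cdot)$ is $e^{\varepsilon}$.

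\textbf{Key step: reducing to the LFD property.} Write $a=\frac{1}{1+e^\varepsilon}$ and $b=\frac{e^\varepsilon}{1+e^\varepsilon}$, and let $A=\{dP_1^*/dP_0^*>t^*\}=(T^*)^c$. Then for any $P$,
\[
m_P^{Q^*}=a+(b-a)P(A).
\]
The set $A$ is the rejection region of a likelihood ratio test $\phi$ between $P_0^*$ and $P_1^*$. The LFD property then gives, for all $P_0\in\mathcal{P}_0$ and $P_1\in\mathcal{P}_1$,
\[
P_0(A)\le P_0^*(A), \qquad P_1(A^c)\le P_1^*(A^c), \text{ i.e. } P_1(A)\ge P_1^*(A).
\]
Consequently $m_{P_0}\le m_{P_0^*}$ and $m_{P_1}\ge m_{P_1^*}$. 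One should also check that $m_{P_1^*}\ge m_{P_0^*}$. This holds because for a likelihood ratio set $P_1^*(A)\ge P_0^*(A)$ (Neyman--Pearson / the threshold structure; if $t^*<1$ one uses the complementary inequality, but $t^*$ defined by \eqref{eq:opt-threshold-comp} yields a test whose power is at least its size).

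\textbf{Monotonicity of binary KL.} The function $(p,q)\mapsto \kl(\mathrm{Ber}(p)\|\mathrm{Ber}(q))$ is nondecreasing in $p$ and nonincreasing in $q$ on the region $p\ge q$. Both partial derivatives have the right sign there:
\[
\partial_p = \log\frac{p(1-q)}{q(1-p)}\ \ge 0, \qquad \partial_q=\frac{q-p}{q(1-q)}\ \le 0.
\]
Moving from $(m_{P_1^*},m_{P_0^*})$ to $(m_{P_1},m_{P_0})$ keeps us in the region $p\ge q$ and increases the gap, so the KL divergence can only increase. This proves the displayed inequality, and hence the saddle point.

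\textbf{Second display.} $E^*$ is a function of $Y$ taking values $v_0^*,v_1^*$, where $v_1^*=m_{P_1^*}/m_{P_0^*}\ge 1\ge v_0^*$. For any $P_0\in\mathcal{P}_0$,
\[
\mathbb{E}_{M_{P_0}^{Q^*}}E^* = 1+m_{P_0}(v_1^*-v_0^*)-\bigl(1-v_0^*\bigr)\cdot 1 \ \le\ \text{its value at } m_{P_0^*},
\]
which equals $1$, because it is increasing in $m_{P_0}$ and $m_{P_0}\le m_{P_0^*}$. So $E^*$ is an e-variable for $\mathcal{P}_0^{Q^*}$.

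Similarly, $\mathbb{E}_{M_{P_1}}\log E^*$ is increasing in $m_{P_1}$, since $\log v_1^*\ge\log v_0^*$. Its infimum over $\mathcal{P}_1^{Q^*}$ is therefore attained at $P_1^*$ and equals $\kl(M_{P_1^*}^{Q^*}\|M_{P_0^*}^{Q^*})$.

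For the upper bound, take any e-variable $E$ for $\mathcal{P}_0^{Q^*}$. Then
\[
\inf_M\mathbb{E}_M\log E\le \mathbb{E}_{M_{P_1^*}^{Q^*}}\log E\le \kl(M_{P_1^*}^{Q^*}\|M_{P_0^*}^{Q^*}),
\]
where the last step is the simple-vs-simple log-optimality of the likelihood ratio. This uses that $E$ is an e-variable under $M_{P_0^*}^{Q^*}$.

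\textbf{Main obstacle.} The delicate point is the sign condition $m_{P_1^*}^{Q^*}\ge m_{P_0^*}^{Q^*}$, i.e. $P_1^*(A)\ge P_0^*(A)$. I would obtain it from the fact that $A$ is an upper level set of $dP_1^*/dP_0^*$. If $A$ had $P_1^*(A)<P_0^*(A)$, then $A^c$ would be a likelihood-ratio-dominated set with the opposite property, which is impossible by the standard Neyman--Pearson argument. A second point to check is that the LFD definition indeed applies to deterministic tests at an arbitrary threshold $t^*$, which the definition's ``every likelihood ratio test'' grants.
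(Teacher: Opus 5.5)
Your proof is correct and follows essentially the paper's route. The upper bound comes from restricting the inner infimum to $(P_0^*,P_1^*)$ and invoking \Cref{thm:opt-binary-simple}, and the lower bound comes from the LFD ordering $m_{P_0}^{Q^*}\le m_{P_0^*}^{Q^*}\le m_{P_1^*}^{Q^*}\le m_{P_1}^{Q^*}$ together with monotonicity of the Bernoulli KL divergence. The differences are minor. You read the ordering directly off the LFD risk inequalities for the single test rejecting on $\{L^*>t^*\}$, and you justify $m_{P_1^*}^{Q^*}\ge m_{P_0^*}^{Q^*}$ from the level-set structure, which is cleaner than the paper's unproved strict inequality. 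You also verify validity and worst-case log-optimality of $E^*$ by direct computation, where the paper instead notes that the induced Bernoulli pair is an LFD pair and cites Theorem 2.1 of \cite{saha2025huber}.
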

Note that the log-optimal e-variable $E^*$ defined in \eqref{opt-evalue-comp} can be written as a function of the unconstrained log-optimal e-variable, $L^*(X)=\frac{dP_1^*(X)}{dP_0^*(X)}$ and an independent uniform random variable $U$:
\begin{align*}
    E^*&= v_0^*\Bigg(\mathds{1}\left(L^*(X)> t^*, U <\dfrac{e^\varepsilon}{1+e^\varepsilon}\right)\\
    &+\mathds{1}\left(L^*(X)\leq t^*, U <\dfrac{1}{1+e^\varepsilon}\right)\Bigg)  \\
    &+v_1^*\Bigg(\mathds{1}\left(L^*(X)> t^*, U \geq\dfrac{1}{1+e^\varepsilon}\right)\\
    &+\mathds{1}\left(L^*(X)\leq t^*, U \geq\dfrac{e^\varepsilon}{1+e^\varepsilon}\right)\Bigg).
\end{align*}
Thus, the constrained optimal solution in this case is a random transformation of the unconstrained optimal solution $L^*(X)$. In the next few sections, we validate this ``optimize-then-constraint'' principle through other types of constraints.

\section{Quantization constraint}
\label{sec:quantized}
Communication, storage, or hardware constraints may require the statistic to lie in a finite set.
Under communication/quantization constraints, a recent line of work, originating in \cite{tsitsiklis1989decentralized}, established minimax optimal rates for a variety of problems,
including distribution estimation and identity testing \cite{pmlr-v75-han18a,chen2021pointwise}, simple hypothesis testing \cite{pensia2023communication}. Among these, our setup is closest to \citet{pensia2023communication},
which analyzes sample complexity for simple hypothesis testing, when each sample is quantized before transmission, provides bounds on sample complexity, However, their setup is fundamentally different in that they focus on fixed sample settings, controlling type-I and type-II errors, and do not provide the safe, anytime-valid guarantees as enabled by the e-value framework. Moreover, their work is limited to simple hypothesis testing, while we generalize our e-value-based framework to composite testing in the presence of LFDs in \Cref{sec:composite}.

In this section, we study the simple vs simple hypothesis testing problem with the simplest nontrivial case:
binary (two-level) quantization. However, it can be generalized to any finite set-size.
Formally, we seek the optimal random variable $E^*$ within the class of \textbf{quantized e-variables} $\mathcal{E}_{2}$. A random variable $E \in \mathcal{E}_2$ if:
\begin{enumerate}
    \item \textbf{Quantization Constraint:} $E$ takes at most two distinct values $\{u_0, u_1\}$.
    \item \textbf{e-variable Constraint:} $\mathbb{E}_{P_0}[E] \le 1$.
\end{enumerate}
The optimization problem is:
\[
\sup_{E \in \mathcal{E}_2} \mathbb{E}_{P_1}[\log E].
\]
The following theorem characterizes the solution.
\begin{theorem}
\label{thm:quantization}
The solution to the above optimization problem is given by:
\[
E^* = \begin{cases} 
u_1 & \text{if } L(X) > t^*, \\
u_0 & \text{if } L(X)\le t^*, \end{cases}
\]
where $L(X)=\frac{dP_1(X)}{dP_0(X)}$ and the values are:
\[
u_1 = \frac{P_1(L(X) > t^*)}{P_0(L(X) > t^*)}, \quad u_0 = \frac{P_1(L(X) \le t^*)}{P_0(L(X) \le t^*)},
\]
and the threshold $t^*$ is $t^* = \frac{u_1 - u_0}{\log u_1 - \log u_0}.$
\end{theorem}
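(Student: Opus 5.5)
The plan is to split the optimization into an outer choice of the partition induced by $E$ and an inner choice of the two values, and to show that the optimal partition is a likelihood-ratio threshold. Any $E\in\mathcal{E}_2$ is $E=u_1\mathds{1}_A+u_0\mathds{1}_{A^c}$ for a measurable set $A$, and we may assume $E$ depends on $X$ only through a set $A$; a randomized $E$ is handled the same way with $\mathds{1}_A$ replaced by a test function $\psi(X)\in[0,1]$. Fix $A$ and write $a_j=P_j(A)$. The problem becomes
\[
\max_{u_0,u_1>0}\; a_1\log u_1+(1-a_1)\log u_0 \quad\text{subject to}\quad a_0u_1+(1-a_0)u_0\le 1 .
\]
This is the simple-vs-simple Kelly problem for the two-point experiment $\mathrm{Ber}(a_1)$ vs.\ $\mathrm{Ber}(a_0)$. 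It is solved by the likelihood ratio $u_1=a_1/a_0$, $u_0=(1-a_1)/(1-a_0)$, either by a Lagrange/Jensen argument or directly from the log-optimality of the likelihood ratio recalled in the introduction. The optimal value is $g(A)=\kl(\mathrm{Ber}(a_1)\|\mathrm{Ber}(a_0))$. This already gives the stated formulas for $u_0,u_1$ once $A=\{L>t^*\}$.

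For the outer problem we maximize $g$ over sets $A$, or over test functions $\psi$, with $a_j=\int\psi\,dP_j$. Write $g(\psi)=h(a_1,a_0)$ with $h(p,q)=p\log\frac pq+(1-p)\log\frac{1-p}{1-q}$. Since $h$ is jointly convex, $g$ is convex in $\psi$, so a maximizer over the convex set $\{0\le\psi\le1\}$ can be taken at an extreme point, which is an indicator. To identify it, I would use a first-order (supporting hyperplane) argument. At an optimal $\psi^*$, for every direction $\psi-\psi^*$ we must have
\[
\partial_p h\,(\textstyle\int\psi\,dP_1-\int\psi^*dP_1)+\partial_q h\,(\int\psi\,dP_0-\int\psi^*dP_0)\le 0 .
\]
Convexity of $g$ turns this into a necessary condition: $\psi^*$ maximizes the linear functional $\int\psi\,(\alpha\,dP_1+\beta\,dP_0)$ with $\alpha=\partial_p h=\log(u_1/u_0)$ and $\beta=\partial_q h=-(u_1-u_0)$. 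By the Neyman--Pearson argument, the maximizer is $\psi^*=\mathds{1}\{\alpha p_1+\beta p_0>0\}$. If $u_1>u_0$, so that $\alpha>0$, this is $\{L>(u_1-u_0)/(\log u_1-\log u_0)\}$, which gives exactly the fixed-point threshold $t^*$. The labeling $u_1>u_0$ is without loss of generality by symmetry. Ties on $\{L=t^*\}$ may be assigned to either side, which is where randomization could in principle enter.

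The main obstacle is the rigor of the first-order step. The necessity of the linearized condition needs differentiability of $h$ at the optimum, and for that I must rule out degenerate optima with $a_j\in\{0,1\}$. Those give value $0$, or force $P_0(A)=0<P_1(A)$, which yields infinite value and would have to be treated separately or excluded by mutual absolute continuity. I also need existence of a maximizer. For that I would use weak-* compactness of $\{0\le\psi\le1\}$ in $L^\infty$ together with continuity of $h$ in $(a_1,a_0)$. Alternatively, I would parametrize directly by thresholds and use the fact that the attainable set $\{(a_0,a_1)\}$ is the convex region bounded by the ROC curve. The convex function $h$ attains its maximum on the upper boundary of that region, and the upper boundary consists of likelihood-ratio tests. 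The tangency condition on that boundary then gives the slope relation $t^*$.
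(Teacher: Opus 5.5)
Your argument is correct and reaches the same fixed-point characterization, but it takes a different route from the paper's proof of this theorem.

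The paper alternates between the set and the values. With $(u_1,u_0)$ frozen, it invokes Neyman--Pearson to restrict to sets $\{L>t\}$. With the set frozen, it finds $u_1=\beta/\alpha$ and $u_0=(1-\beta)/(1-\alpha)$, which is your inner problem. It then locates $t^*$ by differentiating $J(t)$ along the threshold family, $dJ=-H(t)\,dQ_0(t)$.

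You instead eliminate the values first and maximize the convex functional $\psi\mapsto h(\int\psi\,dP_1,\int\psi\,dP_0)$ over all test functions. A single Gateaux condition then gives the threshold structure and the self-consistent threshold together, because the weight $\log(u_1/u_0)\,L-(u_1-u_0)$ changes sign exactly at $t^*$. This is the paper's own argument for \Cref{thm:opt-binary-simple}, taken with $[y_{\min},y_{\max}]=[0,1]$: its $C_1,C_0$ are your $\log(u_1/u_0)$ and $-(u_1-u_0)$. So your route shows that binary quantization is the $\varepsilon=\infty$ case of the binary LDP problem.

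Your route also handles atoms of $L$, which the paper's steps do not quite do. With the values frozen, the best \emph{deterministic} set of $P_0$-size at most $\alpha_{\max}$ need not be a threshold set unless that size is attained by some $\{L>t\}$; otherwise it is a knapsack problem. Likewise, the chain rule $dJ=-H\,dQ_0$ drops the jump terms at atoms of $Q_0$. Your extreme-point plus first-order argument leaves only the expected ambiguity on $\{L=t^*\}$.

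In exchange, the paper's route is more elementary. It also gives the one-dimensional monotonicity picture ($J$ increases as $t$ decreases while $H(t)>0$), which is convenient for computing $t^*$. Your ROC-tangency variant is exactly that step in geometric form.

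Two points in your list of obstacles need correcting, although the paper ignores both as well.

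\textbf{Degenerate points.} They are not only of value $0$ or $+\infty$. The case $a_1=1$, $a_0\in(0,1)$, and its complement, has the finite value $-\log a_0$ at a point where $h$ is not differentiable, and it can be the optimum. For example, take $L=\mathds{1}_B/P_0(B)$, i.e.\ $P_1=P_0(\cdot\mid B)$. There the stated formula survives only as $u_0=0$, $t^*=0$ under $\log 0=-\infty$, so you need one-sided, extended-valued derivatives or a direct check.

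\textbf{Existence.} $h$ is only lower semicontinuous on $[0,1]^2$, so your compactness argument needs an extra assumption such as $\kl(P_1\|P_0)<\infty$. Without it the supremum can be $+\infty$ and unattained. For instance, take $P_1(\{k\})\propto k^{-2}$ and $P_0(\{k\})=P_1(\{k\})e^{-k^3}$; then the sets $A=\{k\}$ give values of order $k$.
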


We reiterate that the constrained optimizer is obtained by applying a monotone transformation on the unconstrained optimal solution, i.e., the likelihood ratio.

\section{Boundedness constraint}
\label{sec:bounded}
Imposing almost-sure boundedness constraints on e-variables is natural in several settings.
From a robustness perspective, bounding a test statistic prevents excessive influence
from rare but extreme observations, a principle closely related to classical robust statistics
and influence-function control \cite{huber1964robust,hampel1974influence}.
Boundedness is equally fundamental in privacy-preserving analysis. In global differential privacy, sensitivity is defined through worst-case bounded changes in the statistic under single-point perturbations, and many standard mechanisms (e.g., Laplace or Gaussian mechanisms) require a bounded test statistic to ensure finite noise calibration
\cite{dwork2006calibrating,dwork2014algorithmic}. While recent work by \citet{csillag2025differentially} relies on e-values with bounded sensitivity for constructing globally differentially private tests, it does not provide a systematic way to construct the log-optimal bounded e-value. 

In this section, we consider testing $P_0$ against $P_1$ in the simple versus simple setting.
We define the admissible domain $\mathcal{D}$ as the set of all $P_0$-measurable random variables bounded almost surely within the interval $[c_1, c_2]$, where $0 \le c_1 \le 1 \le c_2 < \infty$:
\[
\mathcal{D} = \{ E : X \to \mathbb{R} \mid c_1 \le E \le c_2 \quad P_0\text{-a.s.} \}.
\]
We seek a random variable $E^*$ that solves the following constrained optimization problem:
\begin{align*}
\sup_{E \in \mathcal{D}} \quad  \mathbb{E}_{P_1}[\log E] \quad
\text{subject to} \quad & \mathbb{E}_{P_0}[E] \le 1.
\end{align*}
\begin{theorem}
\label{thm:bounded}
The solution $E^*$ to the optimization problem exists and is unique $P_0$-almost surely and is given by:
\[
E^* = \min\left( c_2, \max\left( c_1, \frac{L(X)}{\lambda^*} \right) \right)\quad P_0\text{-a.s.},
\]
where $L(X)=\frac{dP_1(X)}{dP_0(X)}$ is the likelihood ratio and $\lambda^* > 0$ is the constant such that $\mathbb{E}_{P_0}[E^*] = 1$.
\end{theorem}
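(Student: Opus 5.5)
The plan is a Lagrangian / pointwise-maximization argument of KKT type. Write $\mathbb{E}_{P_1}[\log E]=\mathbb{E}_{P_0}[L(X)\log E]$. For any $\lambda>0$ and any feasible $E\in\mathcal{D}$ with $\mathbb{E}_{P_0}[E]\le 1$,
\[
\mathbb{E}_{P_1}[\log E]\le \mathbb{E}_{P_0}\bigl[L\log E-\lambda E\bigr]+\lambda .
\]
The integrand $g_\ell(e)=\ell\log e-\lambda e$ is strictly concave in $e$ when $\ell>0$. Its unconstrained maximizer is $e=\ell/\lambda$, so on $[c_1,c_2]$ it is maximized exactly at the clip $\min(c_2,\max(c_1,\ell/\lambda))$. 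When $\ell=0$ it is maximized uniquely at $c_1$, which again agrees with the clip formula. Hence the pointwise maximizer is $E_\lambda:=\min(c_2,\max(c_1,L/\lambda))$. If $\lambda^*$ satisfies $\mathbb{E}_{P_0}[E_{\lambda^*}]=1$, then for every feasible $E$,
\[
\mathbb{E}_{P_1}[\log E]\le \mathbb{E}_{P_0}[g_L(E_{\lambda^*})]+\lambda^*=\mathbb{E}_{P_1}[\log E_{\lambda^*}],
\]
which proves optimality.

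\textbf{Step 1: existence of $\lambda^*$.} The map $h(\lambda)=\mathbb{E}_{P_0}[E_\lambda]$ is continuous (by dominated convergence, since $c_1\le E_\lambda\le c_2$) and non-increasing. As $\lambda\to\infty$, $h(\lambda)\to c_1\le 1$. As $\lambda\to 0^+$, $h(\lambda)\to c_2P_0(L>0)+c_1P_0(L=0)$.
\begin{itemize}
\item If this limit is at least $1$, the intermediate value theorem gives $\lambda^*$.
\item If $c_1=1$, then $E\equiv 1$ is forced ($E\ge 1$ and $\mathbb{E}_{P_0}E\le 1$), and any large $\lambda$ works.
\item The degenerate case $c_2P_0(L>0)+c_1P_0(L=0)<1$ needs separate handling. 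There the constraint is slack, and the optimizer is $c_2$ on $\{L>0\}$, consistent with the formula as $\lambda\to0$. I would note this, or assume $c_2\ge 1$ together with $P_1\ll P_0$ so that it cannot occur. Indeed, $P_0(L>0)\cdot c_2\ge 1$ need not hold, so I would treat it as an edge case.
\end{itemize}
I would also check that $\mathbb{E}_{P_1}[\log E_{\lambda^*}]$ is finite. This holds because $E_{\lambda^*}\in[c_1,c_2]$ when $c_1>0$. When $c_1=0$, $\log E_{\lambda^*}=-\infty$ only on $\{L=0\}$, which is $P_1$-null.

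\textbf{Step 2: uniqueness.} Suppose $E'$ is another optimizer. Then both inequalities in the chain above are equalities. Equality in $\mathbb{E}_{P_0}[g_L(E')]\le\mathbb{E}_{P_0}[g_L(E_{\lambda^*})]$ forces $E'$ to be a pointwise maximizer $P_0$-a.s. Strict concavity makes that maximizer unique wherever $L>0$. On $\{L=0\}$, $g_0(e)=-\lambda^* e$ has the unique maximizer $c_1$. So $E'=E_{\lambda^*}$ holds $P_0$-a.s. (Equality in the multiplier step also forces $\mathbb{E}_{P_0}E'=1$.)

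\textbf{Main obstacle.} The delicate part is the measure-theoretic bookkeeping rather than the idea itself. This means handling $c_1=0$, where $\log E$ can equal $-\infty$, with the convention $0\log 0=0$ on $\{L=0\}$. It also means handling a possible singular part of $P_1$ relative to $P_0$: I would assume $P_1\ll P_0$, as implicit in writing $L=dP_1/dP_0$, and otherwise set $E=c_2$ there. Finally, the degenerate case in Step 1, where no $\lambda^*$ attains the budget, must be addressed.
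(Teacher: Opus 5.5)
Your argument is correct in the generic case, and its core is the paper's. You write $\mathbb{E}_{P_1}[\log E]=\mathbb{E}_{P_0}[L\log E]$, attach a multiplier $\lambda$ to the budget, and maximize $\ell\log y-\lambda y$ pointwise over $[c_1,c_2]$ to get the clip of $\ell/\lambda$. You then conclude by Lagrange sufficiency once the budget binds. (The paper does this for a general concave increasing utility and then specializes to $\log$.)

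The two routes differ in how existence and uniqueness are obtained. The paper proves existence abstractly, using weak-$*$ compactness of the feasible set in $L^\infty(P_0)$ via Banach--Alaoglu and upper semicontinuity of the objective. It proves uniqueness by a strict-concavity midpoint argument, and it simply asserts that some $\lambda^*$ saturates the budget. Your route differs at each of these points:
\begin{itemize}
\item You make the compactness step unnecessary, because the sufficiency bound exhibits the maximizer directly.
\item You actually prove that $\lambda^*$ exists, via continuity and monotonicity of $h(\lambda)=\mathbb{E}_{P_0}[E_\lambda]$.
\item You get uniqueness from equality in the duality chain.
\end{itemize}
Your uniqueness argument is the sharper one. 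The midpoint argument gives a strict inequality only when $P_1(E_1\neq E_2)>0$, so it says nothing on $\{L=0\}$. Your observation that $g_0(e)=-\lambda^* e$ with $\lambda^*>0$ is uniquely maximized at $c_1$ pins $E^*$ down there.

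Your instinct that the slack case is a genuine exception is also right. Take $P_0$ uniform on $\{0,1\}$, $P_1=\delta_1$, $c_1=0$, $c_2=3/2$. Then every $E$ with $E(1)=3/2$ and $E(0)\in[0,1/2]$ is optimal, and no $\lambda^*$ exists. So both the formula and the uniqueness claim of the statement fail there.

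A few of your edge-case remarks need tightening.
\begin{itemize}
\item The intermediate value theorem needs $c_1<1$ and $\lim_{\lambda\to0^+}h(\lambda)>1$ strictly. If the limit equals $1$ (e.g.\ $c_1<1=c_2$, $P_0(L=0)=0$, $\operatorname{ess\,inf}L=0$ under $P_0$), then $h(\lambda)<1$ for all $\lambda>0$. The optimizer $E\equiv1$ is then only the $\lambda\to0$ limit of the clip.
\item For $c_1=1<c_2$, the forced solution $E\equiv1$ equals the clip only if $L\le\lambda$ $P_0$-a.s. So ``any large $\lambda$ works'' holds only when $L$ is essentially bounded; otherwise $h(\lambda)>1$ for every $\lambda$.
\item The assumption $c_2\ge1$ with $P_1\ll P_0$ does not rule out the slack case. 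What rules it out is $P_0(L=0)=0$ together with $c_2>1$.
\item When $c_1=0$, finiteness of $\mathbb{E}_{P_1}[\log E_{\lambda^*}]$ is needed for your uniqueness step. It follows from $L\log(L/\lambda^*)\ge-\lambda^*/e$, not merely from $\{L=0\}$ being $P_1$-null.
\end{itemize}
None of this undermines your main argument. These are defects of the statement itself, which the paper's proof does not address either.
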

Thus, the constrained optimizer is obtained by
\emph{post-processing} the unconstrained optimizer, i.e., the likelihood ratio.
This exemplifies the optimize–then–constrain principle:
first compute the log-optimal e-variable,
then impose the structural constraint via a transformation.

\section{Bounded convex integral constraints}
\label{sec:moment}
If the likelihood ratio possesses heavy tails, the resulting unconstrained e-variable may exhibit massive or even infinite variance, leading to highly unstable wealth trajectories. 
To enforce statistical stability, it is often crucial to restrict the e-variable space to those satisfying bounded moment constraints (such as a bounded second moment, $\mathbb{E}_{P_0}[E^2] \le C$). To make it more general, we consider a broader class of \emph{convex integral constraints}. 

Let $\phi : (0, \infty) \to \mathbb{R}$ represent a convex penalty function that penalizes extreme values. We seek to find the growth-optimal e-variable subject to the standard validity constraint and an additional bound on its expected penalty under the null hypothesis.
We consider the optimization problem
\begin{equation}
\begin{aligned}
& \underset{E}{\text{maximize}}
& & \mathbb{E}_{P_1}[\log E] \\
& \text{subject to}
& & \mathbb{E}_{P_0}[E] \le 1, \quad \mathbb{E}_{P_0}[\phi(E)] \le C.
\end{aligned}
\end{equation}
for some fixed $C \in \mathbb{R}$ such that the feasible set is non-empty. The following theorem demonstrates that as long as $\phi$ is strictly convex, continuously differentiable, and superlinear, the optimal e-variable always takes the form of a monotonic transform of the likelihood ratio $L=\frac{dP_1}{dP_0}$.
\begin{theorem}
\label{thm:convex-constraint}
 Let $\phi : (0, \infty) \to \mathbb{R}$ be strictly convex and superlinear: $\lim_{x \to \infty} \frac{\phi(x)}{x} = \infty.$
 Then for the above optimization problem,
\begin{itemize}
    \item[(i)] There exists a unique maximizer $E^*$ (up to $P_0$-a.s.).
    \item[(ii)] 
    There exists an increasing function $\psi : (0, \infty) \to (0, \infty)$ such that $E^* = \psi(L) \quad P_0\text{-a.s.}$
\end{itemize}
Moreover, if $\phi$ is differentiable, then there exist $\lambda \in \mathbb{R}$ and $\gamma \ge 0$ such that $\frac{L}{E^*} = \lambda + \gamma \phi'(E^*) ~~ P_0\text{-a.s.}$ 
\end{theorem}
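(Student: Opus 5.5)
We prove existence and uniqueness (i) by a compactness/strict-concavity argument. We then derive the monotone form (ii) and the first-order condition by a pointwise Lagrangian analysis, in the spirit of the proof of \Cref{thm:bounded}. Throughout we work on $L^1(P_0)$, write $L=dP_1/dP_0$ (restricting to the region where $P_1\ll P_0$ is relevant; mass of $P_1$ singular to $P_0$ can only be handled by $E=\infty$ there, and we assume absolute continuity so the objective is $\mathbb{E}_{P_0}[L\log E]$). We assume, as implicit in the statement, that the optimal value is finite. For instance, feasibility of some $E$ with $\mathbb{E}_{P_1}[\log E]>-\infty$ plus $\mathbb{E}_{P_1}\log E\le \mathbb{E}_{P_1}\log L=\kl(P_1\|P_0)<\infty$ gives finiteness.

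\textbf{Step 1 (existence and uniqueness).} The feasible set $\mathcal{F}=\{E>0:\mathbb{E}_{P_0}E\le1,\ \mathbb{E}_{P_0}\phi(E)\le C\}$ is convex, since $\phi$ is convex. Superlinearity of $\phi$ together with $\mathbb{E}_{P_0}\phi(E)\le C$ gives uniform integrability of $\mathcal{F}$ under $P_0$ by de la Vallée-Poussin; note that $\phi$ is bounded below on $(0,\infty)$ up to an affine term, which is controlled by $\mathbb{E}_{P_0}E\le 1$. Hence $\mathcal{F}$ is weakly compact in $L^1(P_0)$ by Dunford--Pettis, and closed because both constraints are convex and strongly lower semicontinuous. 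Take a maximizing sequence $E_n$. By Koml\'os' theorem (or Mazur), convex combinations $\tilde E_n$ converge $P_0$-a.s. to some $E^*\in\mathcal{F}$. Concavity of $\log$ gives $\mathbb{E}_{P_1}\log\tilde E_n\ge$ the corresponding convex combination of values, so $\tilde E_n$ is still maximizing. Passing to the limit needs an upper-semicontinuity argument: $\log^+\tilde E_n\le \tilde E_n$, which is uniformly integrable under $P_0$; weighting by $L$ requires a little care, handled by truncating $L$ at level $M$ and using that $\mathbb{E}_{P_1}[\log^+ E\,;L>M]\le \mathbb{E}_{P_1}[\log^+ E\,;L>M]$ is small by the log-optimality bound. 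Fatou on the negative part then gives $\mathbb{E}_{P_1}\log E^*\ge\limsup$. Uniqueness follows from strict concavity: if $E_1\ne E_2$ on a set of positive $P_1$-measure, the midpoint is feasible and strictly better. On $\{L=0\}$, uniqueness instead comes from the constraint side. There the optimum must minimize the resources spent, which by strict convexity of $\phi$ pins $E^*$ down to a single value $\psi(0^+)$; this is consistent with the $P_0$-a.s. claim.

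\textbf{Step 2 (Lagrangian and pointwise maximization).} Slater's condition holds if $C$ is strictly larger than the infimum of $\mathbb{E}_{P_0}\phi(E)$; the boundary case makes $\mathcal{F}$ essentially a singleton and is trivial. Under Slater, convex duality in $L^1$ gives multipliers $\lambda\ge0$ and $\gamma\ge0$ such that $E^*$ maximizes
\[
\mathbb{E}_{P_0}\bigl[L\log E-\lambda E-\gamma\phi(E)\bigr].
\]
This decouples pointwise: $E^*(x)$ maximizes $g_\ell(e)=\ell\log e-\lambda e-\gamma\phi(e)$ at $\ell=L(x)$. For $\ell>0$, $g_\ell$ is strictly concave and tends to $-\infty$ at $0$; it also tends to $-\infty$ at $\infty$, either by superlinearity if $\gamma>0$, or because then $\lambda>0$, since otherwise the objective is unbounded. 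So there is a unique maximizer $\psi(\ell)$. Monotonicity comes from supermodularity: $\partial^2 g/\partial\ell\,\partial e=1/e>0$, so by Topkis $\psi$ is nondecreasing, and it is strictly increasing because the first-order condition $\ell/e=\lambda+\gamma\phi'(e)$, when $\phi$ is differentiable, cannot hold for two values of $\ell$ at the same $e$. When $\phi$ is not differentiable we use the subdifferential inclusion $\ell/e\in\lambda+\gamma\partial\phi(e)$, whose right side is monotone in $e$, which still yields strictness. The stated identity $L/E^*=\lambda+\gamma\phi'(E^*)$ is exactly the first-order condition. The statement allows $\lambda\in\mathbb{R}$, which covers normalizations where the mean constraint is inactive and $\lambda$ absorbs an affine part of $\phi$.

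\textbf{Main obstacle.} The delicate step is the existence of multipliers in the infinite-dimensional setting and the interchange of the pointwise maximization with the expectation. We would handle this by working directly: define $E_{\lambda,\gamma}=\psi_{\lambda,\gamma}(L)$ and choose $(\lambda,\gamma)$ by continuity and monotonicity of $\mathbb{E}_{P_0}E_{\lambda,\gamma}$ and $\mathbb{E}_{P_0}\phi(E_{\lambda,\gamma})$ in the parameters so that complementary slackness holds. Optimality then follows from the elementary inequality $g_L(E)\le g_L(E_{\lambda,\gamma})$ integrated against $P_0$, which avoids abstract duality. Uniqueness from Step 1 then identifies this candidate with $E^*$.
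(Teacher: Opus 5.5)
Your proposal follows essentially the paper's route: de la Vall\'ee-Poussin plus Dunford--Pettis compactness and strict concavity for (i), then multipliers, pointwise maximization of $\ell\log e-\lambda e-\gamma\phi(e)$, and monotone comparative statics for (ii). It is more careful than the paper about finiteness of the value, Slater's condition, and upper semicontinuity of $E\mapsto\mathbb{E}_{P_0}[L\log E]$, although your displayed truncation inequality is a tautology as written: the uniform bound you need is $\mathbb{E}_{P_1}[\log^+E;A]\le\mathbb{E}_{P_1}[\log L;A]-P_1(A)\log P_1(A)+P_1(A)\log 2$, obtained from Donsker--Varadhan applied to $P_1(\cdot\mid A)$ and $\max(E,1)$.

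Two corrections are needed. First, at a kink of a nondifferentiable $\phi$ the inclusion $\ell/e\in\lambda+\gamma\,\partial\phi(e)$ holds for a whole interval of $\ell$ with the same $e$, so $\psi$ is then only non-decreasing, not strictly increasing; the paper's exchange argument has the same limitation, since it tacitly assumes $x_1\neq x_2$. Second, in your non-Slater boundary case $E^*$ is constant, so the first-order identity fails unless $L$ is constant and that case is not trivial.
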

The proof relies on several elegant, nontrivial arguments about closedness/compactness of the feasible set of e-values in the appropriate topology.

As a concrete illustration, consider the bounded second-moment constraint $\mathbb{E}_{P_0}[E^2] \le C$, by taking $\phi(x)=x^2$. In this case, it follows directly from the above theorem that
\begin{equation}
        E^* = \frac{\sqrt{\lambda^2 + 8\gamma L} - \lambda}{4\gamma},
    \end{equation}
where $\lambda \in \mathbb{R}$ and $\gamma \ge 0$ are chosen such that the constraints are satisfied with equality.

An analogous analysis applies to constraints imposed under the alternative (i.e., $\mathbb{E}_{P_1}[\phi(E)] \le C$) by observing the fact that $\mathbb{E}_{P_1}[\phi(E)]$ can be rewritten as $\mathbb{E}_{P_0}[L\phi(E)]$.

Notably, in this section as well as in the previous two sections, the optimal solution in the simple-versus-simple setting is always a monotone function of the likelihood ratio.

\section{Generalization to composite testing problem with LFD pair}
\label{sec:composite}
The results in the last three sections can be extended to composite hypothesis testing in the presence of a least favorable distribution (LFD) pair. They all share the same structure, where the optimal solution for the simple versus simple problem is a non-decreasing function of the likelihood ratio. In this section, we show that this structural property allows one to 
lift the simple-hypothesis solution to certain composite testing problems whenever a least favorable distribution (LFD) pair exists, which is a strict generalization of Theorem 2.1 of \cite{saha2025huber}.

Suppose we wish to test a composite null hypothesis $\mathcal{P}_0$ against a composite alternative $\mathcal{P}_1$, and $(P_0^*,P_1^*)$ is an LFD pair testing $\mathcal{P}_0$ against $\mathcal{P}_1$. Our objective is to find the log-optimal or growth rate optimal in worst-case (GROW) e-variable $E^*$ subject to a structural constraint (e.g., boundedness, binary outputs, or moment bounds). Let $\mathcal{E}^\prime(H_0) \subseteq \mathcal{E}(H_0)$ denote this constrained class of valid e-variables, for any null $H_0$. The optimization problem is given by:
\[
\sup_{E \in \mathcal{E}^\prime(\mathcal{P}_0)} \inf_{P \in \mathcal{P}_1} \mathbb{E}_{P}[\log E].
\]
 Let $L^* = \frac{dP_1^*}{dP_0^*}$ denote the likelihood ratio of the LFD pair.
\begin{theorem}
\label{thm:comp-gen}
Let $E^*$ be the optimal solution for the simple hypothesis pair $(P_0^*, P_1^*)$, defined as:
\[
E^* = \argmax_{E \in \mathcal{E}^\prime(\{P_0^*\})}  \mathbb{E}_{P_1^*}[\log E].
\]
Assume that it is of the form $E^*=\psi(L^*),$ for some non-decreasing function $\psi$.
Then $E^*\in\mathcal{E}^\prime(\mathcal{P}_0)$ and
\begin{align*}
    \sup_{E \in \mathcal{E}^\prime(\mathcal{P}_0)} \inf_{P_1 \in \mathcal{P}_1} \mathbb{E}_{P_1}[\log E] = &\inf_{P_1 \in \mathcal{P}_1} \mathbb{E}_{P_1}[\log E^*]\\
    &= \mathbb{E}_{P_1^*}[\log E^*].
\end{align*}
\end{theorem}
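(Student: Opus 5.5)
The argument rests on one key property of an LFD pair, the stochastic-ordering property that follows from Huber's definition. For every $t\ge 0$, every $P_0\in\mathcal{P}_0$ and every $P_1\in\mathcal{P}_1$,
\[
P_0(L^*>t)\le P_0^*(L^*>t),\qquad P_1(L^*>t)\ge P_1^*(L^*>t).
\]
To see this, apply the LFD definition to the likelihood ratio test $\phi_t$ that rejects exactly when $L^*>t$ (taking $C_0=C_1=1$). If the ratio test is randomized on $\{L^*=t\}$, the same inequalities hold for $\{L^*\ge t\}$ as well.

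Both inequalities then extend to non-decreasing functions of $L^*$. For non-decreasing $g\ge 0$, the layer-cake formula $\mathbb{E}_P[g(L^*)]=\int_0^\infty P(g(L^*)>s)\,ds$ applies, and each set $\{g(L^*)>s\}$ is an upper level set of $L^*$, of the form $\{L^*>t\}$ or $\{L^*\ge t\}$. This gives $\mathbb{E}_{P_0}[g(L^*)]\le \mathbb{E}_{P_0^*}[g(L^*)]$ and $\mathbb{E}_{P_1}[g(L^*)]\ge\mathbb{E}_{P_1^*}[g(L^*)]$. For $g$ that is non-decreasing but of arbitrary sign, such as $\log\psi$, split it as $g=g^+-g^-$. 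Here $g^+$ is non-decreasing and $g^-$ is non-increasing, so the inequality flips in the correct direction for each part, provided the expectations are well defined.

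With this in hand, validity is immediate. Apply the $P_0$ inequality to $g=\psi$: $\mathbb{E}_{P_0}[E^*]\le \mathbb{E}_{P_0^*}[E^*]\le 1$, so $E^*$ is an e-variable for $\mathcal{P}_0$.

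The remaining part of membership in $\mathcal{E}'(\mathcal{P}_0)$ is that the structural constraint also holds under every $P_0$, and this is where the constraint class matters.
\begin{itemize}
\item For pointwise constraints (boundedness in $[c_1,c_2]$, at most two values), the property is a property of the function $\psi$ itself, so it holds automatically.
\item For the convex integral constraint $\mathbb{E}_{P_0}[\phi(E)]\le C$, the argument needs $\phi\circ\psi$ to be non-decreasing. This holds where $\phi$ is increasing on the range of $\psi$. Otherwise the claim should be stated with the constraint imposed under $P_0^*$, or with an explicit monotonicity assumption.
\end{itemize}
I expect this to be the main subtle point. I would handle it by interpreting $\mathcal{E}'$ as a class for which membership under $\{P_0^*\}$ together with monotonicity in $L^*$ implies membership under $\mathcal{P}_0$, and I would state that interpretation explicitly.

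Next comes the lower bound. Apply the $P_1$ inequality to $g=\log\psi$ to get $\mathbb{E}_{P_1}[\log E^*]\ge \mathbb{E}_{P_1^*}[\log E^*]$ for all $P_1\in\mathcal{P}_1$. Equality holds at $P_1^*$, so $\inf_{P_1}\mathbb{E}_{P_1}[\log E^*]=\mathbb{E}_{P_1^*}[\log E^*]$.

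For the matching upper bound, take any $E\in\mathcal{E}'(\mathcal{P}_0)$. Since $P_0^*\in\mathcal{P}_0$, we have $E\in\mathcal{E}'(\{P_0^*\})$ (the constraint class is monotone under shrinking the null), and $P_1^*\in\mathcal{P}_1$. Hence
\[
\inf_{P_1}\mathbb{E}_{P_1}[\log E]\le \mathbb{E}_{P_1^*}[\log E]\le \mathbb{E}_{P_1^*}[\log E^*],
\]
where the last step is the simple-versus-simple optimality of $E^*$. Taking the supremum over $E$ closes the chain of equalities.

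The only technical care needed is integrability: making sense of $\mathbb{E}_{P_1}[\log E^*]$ when $\log\psi$ is unbounded below. This is handled by the layer-cake split, or by the standard convention allowing values in $[-\infty,\infty)$.
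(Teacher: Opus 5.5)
Your proof is correct and follows the paper's three-step argument: validity from the stochastic ordering of $L^*$ under the LFD pair applied to $\psi$; the worst-case lower bound from the same ordering applied to $\log\psi$; and the upper bound from $P_0^*\in\mathcal{P}_0$, $P_1^*\in\mathcal{P}_1$ together with the simple-versus-simple optimality of $E^*$. You are more careful than the paper in two places: it takes the stochastic dominance as given by the LFD definition instead of deriving it from threshold tests, and it concludes $E^*\in\mathcal{E}'(\mathcal{P}_0)$ after checking only $\mathbb{E}_{P_0}[E^*]\le 1$. Your caveat that the structural constraint must hold under every $P_0$, which for the convex integral constraint needs something like $\phi\circ\psi$ non-decreasing, is a genuine point the paper's proof leaves implicit.
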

The theorem formalizes the following fact: if the constrained simple-vs-simple optimizer
is monotone in the likelihood ratio,
then it automatically extends to the composite problem
via the LFD pair.

\section{Discussion}
\label{sec:discussion}
\paragraph{Beyond logarithmic utility.}
While our results focus on log-optimality due to its wide recognition and close connection with sequential testing, betting, and information theory \cite{kelly1956new,cover1987log,breiman1961optimal,shafer2021testing,grunwald2020safe}, the principle is not strictly limited to the logarithmic utility. Motivated by recent works \cite{koning2024continuous,larsson2024numeraire} that considered broader utility functions, we note that the constrained optimization results established in \Cref{sec:bounded,sec:moment,sec:composite} seamlessly generalize to any utility function $U: (0, \infty) \to \mathbb{R}$ that is strictly increasing and strictly concave. These proofs do not rely on the specific logarithmic form; rather, they depend only on the monotonicity and concavity of the utility function. For completeness, we provide arguments under general utility functions in \Cref{proof-sec4,proof-sec5,proof-sec6}.

\paragraph{Counter-example with no LFD.}
\Cref{thm:comp-gen} strictly relies on the existence of an LFD pair. It is natural to ask whether the optimize-then-constrain principle survives if we relax this requirement. 
Let us consider Example 5.2 from \cite{larsson2024numeraire}, where the observation is $X \in [0, 1]$, the alternative ${Q}$ is Uniform(0,1), and the composite null is the bounded mean class $\mathcal{P} = \{ {P} : \mathbb{E}_{{P}}[X] \le \mu \}$ for some $\mu \in (0, 1/2)$. In this example, LFD does not exist, but the unconstrained log-optimal e-variable (a.k.a.  numeraire e-variable) does exist.
Suppose we impose an almost-sure boundedness constraint $E \le c$. If the optimize-then-constrain principle held, our \Cref{thm:bounded} with $c_1=0,c_2=c$ would suggest that the constrained optimal e-variable is a scaled truncation of the unconstrained numeraire $\left(1 + \lambda^*(X- \mu)\right)$:
\begin{equation}
\label{q:e-star}
    E^* = \min\left(c, \frac{1}{\gamma} \left(1 + \lambda^*(X- \mu)\right) \right),
\end{equation}
where $\lambda^*\in(0,1/\mu)$ and $\lambda^*=\displaystyle\argmax_{\lambda \ge 0} \mathbb{E}_{\mathbb{Q}}[\log(1+\lambda(X-\mu))]$.
Now, consider the following candidate e-variable
\begin{equation}
\label{eq:e-prime}
    E^\prime = \min\left(c, 1 + \lambda_{new}(X- \mu) \right),
\end{equation}
where $\lambda_{new}\in\displaystyle\argmax_{\lambda \ge 0} \mathbb{E}_{\mathbb{Q}}[\log(\min\{c,1 + \lambda(X-\mu)\})]$.
Then, \Cref{prop:counterexample} shows that one can choose $\mu$ and $c$ such that $E^\prime$ has a strictly larger growth rate than that of $E^*$.


When the alternative is also composite, even the existence of the unconstrained growth-optimal e-variable is not guaranteed for the composite null we considered in this example. In general, in the absence of an LFD pair, the unconstrained log-optimal e-value is not guaranteed to exist, and therefore, existence under constraints is not guaranteed either. Our counterexample suggests that in such cases, the optimize–then–constrain principle need not hold, and it is unlikely that a single unifying structural principle can be established without further assumptions.
Consequently, in the absence of LFDs, each constraint class may have to be analyzed separately.

\section{Conclusion}
\label{sec:conc}
This paper studies growth-rate optimal e-variables under several structural constraints in composite hypothesis testing.
In the unconstrained setting, the optimizer is given by the likelihood ratio between least favorable distributions (LFDs). Our main contribution is to determine the constrained optimal e-variables for several natural constraint classes, demonstrating that the constrained optimal solution is a transformation of the unconstrained optimal solution.
Our analysis crucially relies on the existence of least favorable distributions.
When no LFD exists, the log-optimal e-variable for both composite null and alternative is still an open question even for the unconstrained case. 
We illustrated through a counterexample (with a composite null and simple alternative), showing that the ``optimize–then–constrain’' principle can fail in the absence of an LFD pair. A precise characterization of the constraint classes and composite hypothesis families under which such a structural principle continues to hold is an open future direction.

\bibliographystyle{apalike}
\bibliography{ref}

\appendix

\section{Omitted proofs}

\subsection{Omitted proofs from Section 2}
\begin{proof}[Proof of \Cref{thm:kelly-bet}]

First, we must establish that maximizing the KL-divergence fits the optimization framework of Theorem 2 of \cite{kairouz2016extremal}.
They define the utility maximization problem as maximizing $U(Q) = \sum_{y \in \mathcal{Y}} \mu(Q_y)$, where $\mu$ is a sublinear function (convex and homogeneous). KL as Sublinear: $$D_{f}(M_0 \| M_1) = \sum_{y} \mu(Q_y),$$
where $\mu(Q_y) = (P_1^T Q_y) \log\left( \frac{P_0^T Q_y}{P_1^T Q_y} \right)$.
 This function $\mu$ is convex with respect to the mechanism column $Q_y$ because the function $\phi(z,t) = t f(z/t)$ is jointly convex.
So, the KL-divergence objective satisfies the conditions (sublinearity) required for Theorem 2 of \cite{kairouz2016extremal}, which states that for any sublinear function $\mu$ and any $\epsilon \ge 0$, there exists an optimal mechanism $Q^*$ that satisfies: (a) The output alphabet size is at most the input alphabet size: $|\mathcal{Y}| \le |\mathcal{X}|$, and (b) For all $y, x, x'$, the log-likelihood ratio $\left| \ln \frac{Q^*(y|x)}{Q^*(y|x')} \right|\in \{0, \epsilon\}$.
 Now we apply this general result to the specific Bernoulli case, where the input alphabet $\mathcal{X} = \{0, 1\}$.
 
 Step A: From Theorem 2(a), the optimal mechanism has output size $|\mathcal{Y}| \le |\mathcal{X}| = 2$. This means we only need to search for mechanisms with binary outputs ($Y \in \{0, 1\}$).
 
 Step B: Theorem 2(b) restricts the optimal mechanism to satisfy  $\frac{Q^*(y|x)}{Q^*(y|x')} \in \{1, e^\epsilon\}$. Also, note that, to maximize utility, we have $\frac{Q^*(y|x)}{Q^*(y|x')}=e^\epsilon,$ for $x\neq x'$.
 Let $Q(0|1) = \alpha$. Then $Q(0|0) = \alpha e^\epsilon$. Let $Q(1|0) = \beta$. Then $Q(1|1) = \beta e^\epsilon$. Applying probability constraints
 $$\alpha e^\epsilon + \beta = 1 \quad,\quad\alpha + \beta e^\epsilon = 1,$$
 and solving these equations, we get
  $\alpha=\beta = \frac{e^\epsilon}{1+e^\epsilon}$. This is exactly the definition of the $M^*$ provided in \eqref{eq:mech-bernoullli}.
\end{proof}

\begin{proof}[Proof of \Cref{thm:opt-binary-simple}]
    
\textbf{Step 1:}
The $\epsilon$-LDP condition imposes point-wise constraints on the range of the function $Q(1|x)$. Let $y_{max} = \sup_{x\in\mathcal{X}} Q(1|x)$ and $y_{min} = \inf_{x\in\mathcal{X}} Q(1|x)$. 

The constraint for the output $y=1$ implies:
\begin{equation*}
   \sup_{x,x'} \frac{Q(1|x)}{Q(1|x')} \le e^{\epsilon} \implies y_{max} \le y_{min}e^{\epsilon}.
\end{equation*}
The constraint for the output $y=0$ implies:
\begin{equation*}
   \sup_{x,x'} \frac{1-Q(1|x)}{1-Q(1|x')} \le e^{\epsilon} \implies 1-y_{min} \le (1-y_{max})e^{\epsilon}.
\end{equation*}
Thus, any admissible mechanism must satisfy $Q(1|x) \in [y_{min}, y_{max}]$ for all $x \in \mathcal{X}$. Let $\mathcal{Q}$ denote the convex set of all such measurable functions.

\vspace{1em}
\textbf{Step 2:}
For any mechanism $q \in \mathcal{Q}$, the induced marginal probabilities under the null and alternative measures are given by the linear functionals $m_{P_i}^q = \int_{\mathcal{X}} q(x) dP_i(x)$ for $i \in \{0, 1\}$. 

The objective to maximize is the KL-divergence between the induced Bernoulli distributions:
\begin{equation*}
    J(q) = m_{P_1}^q \log\left(\frac{m_{P_1}^q}{m_{P_0}^q}\right) + (1-m_{P_1}^q) \log\left(\frac{1-m_{P_1}^q}{1-m_{P_0}^q}\right).
\end{equation*}
We first establish that $J(q)$ is a convex functional of $q$. The KL-divergence between two probability distributions is fundamentally a jointly convex function of its arguments. Since the marginals $m_{P_1}^q$ and $m_{P_0}^q$ are strictly linear functionals of the mechanism $q$, the objective $J(q)$ is formed by the composition of a convex function with a linear map. Such a composition is always convex. Consequently, the objective $J(q)$ is a convex functional over the convex, bounded feasible set $\mathcal{Q}$. A fundamental property of convex optimization over a bounded set is that the maximum must occur at some extreme point of the feasible set.

Let $q^*$ be the globally optimal mechanism. Because $J(q)$ is strictly convex, the first-order necessary condition for $q^*$ to maximize $J$ on $\mathcal{Q}$ is that the Gateaux (directional) derivative of $J$ at $q^*$, taken in the direction of any other feasible mechanism $p \in \mathcal{Q}$, must be non-positive. Moving from $q^*$ towards $p$ must not increase the objective. Thus, for all $p \in \mathcal{Q}$:
\begin{equation*}
    dJ(q^*; p - q^*) \le 0.
\end{equation*}
We compute this Gateaux derivative using the chain rule:
\begin{equation*}
    dJ(q^*; p - q^*) = \frac{\partial J}{\partial m_{P_1}^{q^*}} \int_{\mathcal{X}} (p(x) - q^*(x)) dP_1(x) + \frac{\partial J}{\partial m_{P_0}^{q^*}} \int_{\mathcal{X}} (p(x) - q^*(x)) dP_0(x).
\end{equation*}
Let $C_1(q^*)$ and $C_0(q^*)$ denote the partial derivatives of the KL-divergence with respect to the marginals:
\begin{align*}
   C_1(q^*) &= \log\left(\frac{m_{P_1}^{q^*}(1-m_{P_0}^{q^*})}{m_{P_0}^{q^*}(1-m_{P_1}^{q^*})}\right)  \\
   C_0(q^*) &= \frac{m_{P_0}^{q^*} - m_{P_1}^{q^*}}{m_{P_0}^{q^*}(1-m_{P_0}^{q^*})}.
\end{align*}
By substituting the Radon-Nikodym derivative $L(x) = \frac{dP_1}{dP_0}(x)$, we can factor the directional derivative into a single integral with respect to the null measure $P_0$:
\begin{equation*}
   dJ(q^*; p - q^*) = \int_{\mathcal{X}} (p(x) - q^*(x)) \left[ C_1(q^*) L(x) + C_0(q^*) \right] dP_0(x) \le 0.
\end{equation*}
\vspace{1em}
\textbf{Step 3:} Define $K(x) = C_1(q^*) L(x) + C_0(q^*)$. The optimality condition requires that for \textit{every} feasible alternative mechanism $p \in \mathcal{Q}$:
\begin{equation*}
    \int_{\mathcal{X}} (p(x) - q^*(x)) K(x) dP_0(x) \le 0.
\end{equation*}
To guarantee this integral is non-positive for all possible choices of $p(x) \in [y_{min}, y_{max}]$, the mechanism $q^*(x)$ must systematically pointwise neutralize the sign of the $K(x)$. We analyze this in two cases:
\begin{itemize}
    \item {When $K(x) > 0$:} The product $(p(x) - q^*(x))K(x)$ must be $\le 0$. Since $K(x)$ is positive, we must have $p(x) - q^*(x) \le 0 \implies p(x) \le q^*(x)$ for all $p \in \mathcal{Q}$. Because $p(x)$ can be as large as $y_{max}$, the only way to satisfy this bound for all $p$ is if we set $q^*(x)$ to its upper bound $y_{max}$.
    \item {When $K(x) < 0$:} The product $(p(x) - q^*(x))K(x)$ must again be $\le 0$. Since $K(x)$ is negative, we must have $p(x) - q^*(x) \ge 0 \implies p(x) \ge q^*(x)$ for all $p \in \mathcal{Q}$. Because $p(x)$ can be as small as $y_{min}$, the only way to satisfy this is if we set $q^*(x)$ to its lower bound $y_{min}$.
\end{itemize}
Since $m_{P_1}^{q^*} > m_{P_0}^{q^*}$ for any non-trivial mechanism, the log-odds ratio $C_1(q^*)$ is strictly positive. Therefore, we can algebraically rearrange the condition $K(x) > 0$ to isolate the likelihood ratio:
\begin{equation*}
   C_1(q^*) L(x) + C_0(q^*) > 0 \iff L(x) > -\frac{C_0(q^*)}{C_1(q^*)} .
\end{equation*}
Thus, the optimal mechanism is:
\begin{equation*}
   Q(1|x) = y_{max} \cdot \mathbbm{1}\left(L(x) > t\right) + y_{min} \cdot \mathbbm{1}\left(L(x) \le t\right),
\end{equation*}
where the explicit optimal threshold $t$ is strictly given by the ratio of the gradients:
\begin{equation*}
   t = -\frac{C_0(q^*)}{C_1(q^*)} = \frac{m_{P_1}^{q^*} - m_{P_0}^{q^*}}{m_{P_0}^{q^*}(1-m_{P_0}^{q^*}) \log\left(\frac{m_{P_1}^{q^*}(1-m_{P_0}^{q^*})}{m_{P_0}^{q^*}(1-m_{P_1}^{q^*})}\right)}.
\end{equation*}
\end{proof}

\begin{proof}[Proof of \Cref{thm:bin-opt-comp}]
By the definition of the infimum, for any fixed mechanism $Q$, the worst-case KL divergence over the composite classes is bounded above by the KL divergence at the specific pair $(P_0^*, P_1^*)$:
    \begin{equation}
        \inf_{\substack{P_1 \in \mathcal{P}_1 \\ P_0 \in \mathcal{P}_0}} D_{KL}(M_{P_1}^{Q} \| M_{P_0}^{Q}) \le D_{KL}(M_{P_1^*}^{Q} \| M_{P_0^*}^{Q}).
    \end{equation}
    Taking the supremum over all admissible mechanisms $Q \in \mathcal{D}_{\epsilon}^{(2)}$ on both sides yields:
    \begin{equation} \label{eq:upper_bound}
        \sup_{Q \in \mathcal{D}_{\epsilon}^{(2)}} \inf_{\substack{P_1 \in \mathcal{P}_1 \\ P_0 \in \mathcal{P}_0}} D_{KL}(M_{P_1}^{Q} \| M_{P_0}^{Q}) \le \sup_{Q \in \mathcal{D}_{\epsilon}^{(2)}} D_{KL}(M_{P_1^*}^{Q} \| M_{P_0^*}^{Q}) = D_{KL}(M_{P_1^*}^{Q^*} \| M_{P_0^*}^{Q^*}),
    \end{equation}
    where the equality follows from the definition of $Q^*$ and \Cref{thm:opt-binary-simple}.

     We now show that for the fixed optimal mechanism $Q^*$, the LFD pair $(P_0^*, P_1^*)$ minimizes the KL divergence. Let $L^*(x) = \frac{dP_1^*(x)}{dP_0^*(x)}$ be the likelihood ratio of the LFD pair.
    Recall the structure of the optimal binary mechanism $Q^*$:
    \begin{equation}
        Q^*(1|x) = \begin{cases} 
            \frac{e^\epsilon}{1+e^\epsilon} & \text{if } L^*(x) > t^*, \\ 
            \frac{1}{1+e^\epsilon} & \text{if } L^*(x) \le t^*. \end{cases}
    \end{equation}
    By inspection, $Q^*(1|x)$ is a monotonically non-decreasing function of the likelihood ratio $L^*(x)$. Let us denote this function as $\psi(L^*(x)) = Q^*(1|x)$.
    
    By the definition of a least favorable distribution pair, the likelihood ratio $L^*(X)$ exhibits stochastic dominance over the composite classes. Specifically, for any non-decreasing function $f$:
    \begin{align}
        \mathbb{E}_{P_1}[f(L^*(X))] &\ge \mathbb{E}_{P_1^*}[f(L^*(X))] \quad \forall P_1 \in \mathcal{P}_1 \label{eq:stoch_dom_1} \\
        \mathbb{E}_{P_0}[f(L^*(X))] &\le \mathbb{E}_{P_0^*}[f(L^*(X))] \quad \forall P_0 \in \mathcal{P}_0. \label{eq:stoch_dom_0}
    \end{align}
    Let $m_{P}^{Q^*} = \mathbb{E}_{P}[Q^*(1|X)]$ be the marginal probability of outputting $1$ under distribution $P$. Substituting our non-decreasing function $\psi$ into equations \eqref{eq:stoch_dom_1} and \eqref{eq:stoch_dom_0}, we obtain:
    \begin{align}
    \label{ineq-mean}
        m_{P_1}^{Q^*} \ge m_{P_1^*}^{Q^*} \quad \forall P_1 \in \mathcal{P}_1 ,~~\text{and}
~~~        m_{P_0}^{Q^*} \le m_{P_0^*}^{Q^*} \quad \forall P_0 \in \mathcal{P}_0.
    \end{align}
    The induced marginals $M_P^{Q^*}$ are Bernoulli distributions. The KL divergence between two Bernoulli distributions with success probabilities $p$ and $q$ is defined as $d(p, q) = p \log(p/q) + (1-p) \log((1-p)/(1-q))$. 
    For $p > q$, the function $d(p, q)$ is strictly increasing in $p$ and strictly decreasing in $q$. Since $Q^*$ optimally distinguishes $P_1^*$ and $P_0^*$, we strictly have $m_{P_1^*}^{Q^*} > m_{P_0^*}^{Q^*}$.
    
    Therefore, $m_{P_1}^{Q^*} \ge m_{P_1^*}^{Q^*} > m_{P_0^*}^{Q^*} \ge m_{P_0}^{Q^*}$ implies that the KL divergence can only increase for any other $P_1 \in \mathcal{P}_1$ and $P_0 \in \mathcal{P}_0$:
    \begin{equation}
        D_{KL}(M_{P_1}^{Q^*} \| M_{P_0}^{Q^*}) = d(m_{P_1}^{Q^*}, m_{P_0}^{Q^*}) \ge d(m_{P_1^*}^{Q^*}, m_{P_0^*}^{Q^*}) = D_{KL}(M_{P_1^*}^{Q^*} \| M_{P_0^*}^{Q^*})
    \end{equation}
    Taking the infimum over $P_1 \in \mathcal{P}_1, P_0 \in \mathcal{P}_0$, we get:
    \begin{equation} \label{eq:inf_bound}
        \inf_{\substack{P_1 \in \mathcal{P}_1 \\ P_0 \in \mathcal{P}_0}} D_{KL}(M_{P_1}^{Q^*} \| M_{P_0}^{Q^*}) \ge D_{KL}(M_{P_1^*}^{Q^*} \| M_{P_0^*}^{Q^*})
    \end{equation}

     Since the supremum over all mechanisms $\mathcal{D}_{\epsilon}^{(2)}$ must be at least the value achieved by the specific mechanism $Q^*$, we have:
    \begin{equation}
        \sup_{Q \in \mathcal{D}_{\epsilon}^{(2)}} \inf_{\substack{P_1 \in \mathcal{P}_1 \\ P_0 \in \mathcal{P}_0}} D_{KL}(M_{P_1}^{Q} \| M_{P_0}^{Q}) \ge \inf_{\substack{P_1 \in \mathcal{P}_1 \\ P_0 \in \mathcal{P}_0}} D_{KL}(M_{P_1}^{Q^*} \| M_{P_0}^{Q^*})
    \end{equation}
    Applying the lower bound from \eqref{eq:inf_bound}:
    \begin{equation} \label{eq:lower_bound_final}
        \sup_{Q \in \mathcal{D}_{\epsilon}^{(2)}} \inf_{\substack{P_1 \in \mathcal{P}_1 \\ P_0 \in \mathcal{P}_0}} D_{KL}(M_{P_1}^{Q} \| M_{P_0}^{Q}) \ge D_{KL}(M_{P_1^*}^{Q^*} \| M_{P_0^*}^{Q^*})
    \end{equation}
    Combining the upper bound \eqref{eq:upper_bound} and the lower bound \eqref{eq:lower_bound_final}, we conclude that the inequalities are in fact equalities, completing the first part.

   Now, $\mathcal{P}_0^{Q^*}=\{\text{Ber}(m_{P_0}):P_0\in\mathcal{P}_0\}$ vs. $\mathcal{P}_1^{Q^*}=\{\text{Ber}(m_{P_1}):P_1\in\mathcal{P}_1\},$ and in \eqref{ineq-mean}, we derived that $m_{P_0^*}=\max\{m_{P_0}:P_0\in\mathcal{P}_0\}<\min\{m_{P_1}:P_1\in\mathcal{P}_1\}=m_{P_1^*}.$ Thus, $(M_{P_0^*}^{Q^*},M_{P_1^*}^{Q^*})$ is the LFD pair for $\mathcal{P}_0^{Q^*}$ vs. $\mathcal{P}_1^{Q^*}$.  Now, it follows from Theorem 2.1 of \cite{saha2025huber} that is an e-variable under $\mathcal{P}_0^{Q^*}=\{M_{P_0}^{Q^*}:P_0\in \mathcal{P}_0\}$ and
    $$ \sup_{E\in \mathcal{E}(\mathcal{P}_0^{Q^*})} \inf_{M\in\mathcal{P}_1^{Q^*}}\mathbb{E}_{M}\!\left(\log E(Y)\right)=\inf_{M\in\mathcal{P}_1^{Q^*}}\mathbb{E}_{M}\!\left(\log E^*(Y)\right)=\kl(M_{P_1^*}^{\Q^*},M_{P_0^*}^{\Q^*}).$$
This completes the proof of the theorem.  
\end{proof}

\subsection{Omitted proofs from Section 3}
\label{proof-sec3}
\begin{proof}[Proof of \Cref{thm:quantization}]
Without loss of generality, assume $u_1 > u_0$. Let $S$ denote the event that $E = u_1.$ So, any binary e-variable $E$ can be written as
\[ E = u_1 \mathbb{I}_S + u_0 \mathbb{I}_{S^c}. \]

\textbf{Step 1:}
We fix $u_1 > u_0$ and substitute this form into the optimization problem for $S$. The objective function becomes:
  \begin{align*}
\mathbb{E}_{P_1}[\log(E)] &= P_1(S) \log(u_1) + (1 - P_1(S)) \log(u_0) \\
&= P_1(S) (\log(u_1) - \log(u_0)) + \log(u_0).
\end{align*}
Let $\beta = P_1(S)$. Since we assumed $u_1 > u_0$, the term $(\log(u_1) - \log(u_0))$ is strictly positive. Therefore, the objective function is strictly increasing in the power $\beta$. To maximize the expected log-growth, we must maximize $\beta$.

Next, consider the constraint $\mathbb{E}_{P_0}[E] \le 1$:
  \begin{align*}
P_0(S) u_1 + (1 - P_0(S)) u_0 &\le 1 \\
P_0(S) (u_1 - u_0) &\le 1 - u_0.
\end{align*}
Let $\alpha = P_0(S)$. Since $u_1 > u_0$, we can rearrange the inequality to:
  \[ \alpha \le \frac{1 - u_0}{u_1 - u_0}. \]
Let $\alpha_{\max} = \frac{1 - u_0}{u_1 - u_0}$. The problem of optimizing the set $S$ for fixed values $u_1, u_0$ reduces to:
  \[
    \displaystyle\text{Maximize}_S~~~ P_1(S) \quad \text{subject to } P_0(S) \le \alpha_{\max}.
    \]
The Neyman-Pearson Lemma states that the unique solution to this problem (maximizing power for a bounded size) is given by the likelihood ratio test. Specifically, there exists a threshold $t$ such that the optimal set is a level set of the likelihood ratio $L = dP_1/dP_0$:
  \[ S^* = [ L > t].\]
Consequently, the optimal estimator $E^*$ must be of the form:
  \[ E^* = \begin{cases} u_1 & \text{if } L(X) > t,\\ u_0 & \text{if } L(X) \le t. \end{cases} \]
This justifies restricting the search space to threshold functions of $L$.

\textbf{Step 2: }
Fix an arbitrary measurable set $S \in \mathcal{F}$. This fixes the probability masses:
  \[ \alpha = P_0(S), \quad \beta = P_1(S).\]
Note that these probabilities depend only on the geometry of the set $S$, not on the scalar values $u_1, u_0$ assigned to it.
We now solve the inner maximization problem: find the optimal scalars $u_1, u_0$ for this fixed set $S$.
\begin{align*}
\text{Maximize} \quad & \beta \log(u_1) + (1-\beta) \log(u_0), \\
\text{Subject to} \quad & \alpha u_1 + (1-\alpha) u_0 = 1.
\end{align*}

We form the Lagrangian with multiplier $\lambda$:
  \[ \mathcal{L} = \beta \log(u_1) + (1-\beta) \log(u_0) - \lambda (\alpha u_1 + (1-\alpha) u_0 - 1). \]
Solving the first-order conditions $\frac{\partial \mathcal{L}}{\partial u} = 0$ yields:
  \[ u_1^* = \frac{\beta}{\lambda \alpha}, \quad u_0^* = \frac{1-\beta}{\lambda (1-\alpha)}.\]
Substituting into the constraint forces $\lambda = 1$. Thus, for any fixed set $S$, the conditionally optimal values are uniquely determined as:
  \[ u_1^*(S) = \frac{\beta}{\alpha}, \quad u_0^*(S) = \frac{1-\beta}{1-\alpha}.\]
Since we only focus on $S$ of the form $[L>t]$, we can rewrite $u_1,u_0$ as functions of $t$:
  \[ u_1^*(t) = \frac{\beta(t)}{\alpha(t)}, \quad u_0^*(t) = \frac{1-\beta(t)}{1-\alpha(t)},\]
where $\alpha(t)=P_0[L>t]$ and $\beta(t)=P_1[L>t]$.
This result holds for any choice of $S$. Therefore, we can substitute these optimal values back into the original objective function, reducing the problem to a single optimization over the parameter $t$.

\textbf{{Step 3: }}
Having established the optimal values $u_1(t)$ and $u_0(t)$ for any fixed $t$, we substitute them into the objective function. The objective becomes a function of a single scalar variable $t$: 
  \begin{equation*}
J(t) = \beta(t)\log\left(\frac{\beta(t)}{\alpha(t)}\right) + (1-\beta(t))\log\left(\frac{1-\beta(t)}{1-\alpha(t)}\right).
\end{equation*} 

Let $Y = L(X)$ denote the likelihood ratio as a scalar random variable taking values in $\mathbb{R}^{+}$. Let $Q_0$ and $Q_1$ be the pushforward probability measures of $Y$ under the null and alternative hypotheses $P_0$ and $P_1$, respectively. We first rigorously establish the relationship between these measures without assuming the existence of a continuous Lebesgue density.

By the definition of the Radon-Nikodym derivative, $dP_1 = L \, dP_0$. For any Borel set $B \subseteq \mathbb{R}^{+}$, the probability measure of $Y$ under the alternative hypothesis is given by:
  \begin{equation*}
Q_1(B) = P_1(Y \in B) = \int_{\{x: L(x) \in B\}} dP_1(x).
\end{equation*} 

Substituting the Radon-Nikodym derivative into the integral yields:
  \begin{equation*}
Q_1(B) = \int_{\{x: L(x) \in B\}} L(x) \, dP_0(x).
\end{equation*} 

Applying the change of variables theorem for pushforward measures (the Law of the Unconscious Statistician), we can integrate directly over the space of $Y$ using its null measure $Q_0$:
  \begin{equation*}
Q_1(B) = \int_{B} y \, dQ_0(y).
\end{equation*} 
Because this equality holds for every measurable set $B$, the measures satisfy the relationship 
\begin{equation}
\label{eq:changeof-var}
dQ_1(y) = y \, dQ_0(y) \text{ almost everywhere.
}
\end{equation}
Applying the multivariate chain rule to $J(t) \equiv f(\alpha(t), \beta(t))$ with respect to the underlying probability measure yields the Lebesgue-Stieltjes differential:
  \begin{equation*}
dJ(t) = \frac{\partial f}{\partial \alpha} d\alpha(t) + \frac{\partial f}{\partial \beta} d\beta(t).
\end{equation*}
First, we differentiate the objective function and compute the partial derivatives evaluated at the optimal scalars $u_1(t) = \frac{\beta(t)}{\alpha(t)}$ and $u_0(t) = \frac{1-\beta(t)}{1-\alpha(t)}$: 
  \begin{align*}
\frac{\partial f}{\partial \alpha} &= -\frac{\beta(t)}{\alpha(t)} + \frac{1-\beta(t)}{1-\alpha(t)} = -u_1(t) + u_0(t) \\
\frac{\partial f}{\partial \beta} &= \log u_1(t) - \log u_0(t).
\end{align*}
As $t$ decreases, the differential probability mass added to the rejection region is given by the non-negative measure $d(-\alpha(t)) = dQ_0(t)$. And \eqref{eq:changeof-var} imposes that $d(-\beta(t)) = dQ_1(t) = t \, dQ_0(t)$. Evaluating the total differential of $J$ with respect to the marginal expansion of the region yields: 
  \begin{equation*}
dJ(t) = \left[ (-u_1(t) + u_0(t)) + t(\log u_1(t) - \log u_0(t)) \right] dQ_0(t).
\end{equation*} 
Since $dQ_0(t)$ is a strictly positive measure, the objective $J(t)$ strictly increases as $t$ decreases as long as the bracketed term is positive. We define:
  \begin{equation*}
H(t) = (u_1(t) - u_0(t)) - t(\log u_1(t) - \log u_0(t)).
\end{equation*}
We get an optimal threshold implies setting $H(t^*) = 0$: 
  \begin{equation*}
-t^*(\log u_1(t^*) - \log u_0(t^*)) + (u_1(t^*) - u_0(t^*)) = 0.
\end{equation*} 
Solving for $t^*$: 
  \begin{equation*}
t^* = \frac{u_1(t^*) - u_0(t^*)}{\log u_1(t^*) - \log u_0(t^*)}.
\end{equation*} 
Note that this optimal threshold may not be unique when the distribution of $L$ under the null is not continuous. However, the threshold $t^*$ always remains a valid, globally optimal choice (although it may not be unique). 
\end{proof}

\subsection{Omitted proofs from Section 4}
\label{proof-sec4}
\begin{proof}[Proof of \Cref{thm:bounded}]
We prove a more general version of \Cref{thm:bounded} using a general utility function $U:(0,\infty)\mapsto\mathbb {R}$, which is concave and strictly increasing. Then, \Cref{thm:bounded} follows by plugging in $U=\log$ below.

Let the domain of feasible random variables be $\mathcal{D} = \{E : c_1\leq E\leq c_2 ~~~P_0-\text{a.s. }, \mathbb{E}_{P_0}[E] \le 1\}$. 
We seek a random variable $E^*$ that solves the following constrained optimization problem:
\begin{align*}
\sup_{E \in \mathcal{D}} \quad  \mathbb{E}_{P_1}[U(E)] \quad
\text{subject to} \quad & \mathbb{E}_{P_0}[E] \le 1.
\end{align*}
\begin{theorem}
\label{thm:bounded-gen}
The solution $E^*$ to the optimization problem is given by:
\[
E^* = \min\left( c_2, \max\left( c_1, g(L(X)) \right) \right),
\]
where $L(X)=\frac{dP_1(X)}{dP_0(X)}$ is the likelihood ratio and $g$ is some strictly increasing function.
\end{theorem}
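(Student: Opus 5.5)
We will treat this as a concave maximization over a convex set and solve it by Lagrangian duality applied pointwise. Rewrite the objective under the null as $\mathbb{E}_{P_1}[U(E)] = \mathbb{E}_{P_0}[L\,U(E)]$ (restricting to the part of $P_1$ absolutely continuous w.r.t. $P_0$; on the singular part, where $P_0$ places no mass, the variable is unconstrained by the mean condition, so we set $E=c_2$ there, which is consistent with $L=\infty$). For a multiplier $\lambda>0$ we introduce the Lagrangian
\[
\mathcal{L}_\lambda(E)=\mathbb{E}_{P_0}\bigl[L\,U(E)-\lambda E\bigr]+\lambda .
\]
Since the integrand depends on $E$ only pointwise, we maximize it separately for each $x$: for fixed $\ell=L(x)$, we maximize $h_\ell(e)=\ell U(e)-\lambda e$ over $e\in[c_1,c_2]$. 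The function $h_\ell$ is concave, so its constrained maximizer is the projection onto $[c_1,c_2]$ of its unconstrained maximizer $g_\lambda(\ell)$. When $U$ is differentiable, $g_\lambda(\ell)=(U')^{-1}(\lambda/\ell)$; in general we take $g_\lambda(\ell)$ from the superdifferential condition $\lambda/\ell\in\partial U(e)$, selecting for instance the largest such $e$.

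Next we check monotonicity. Because $U$ is concave, $\partial U$ is a non-increasing correspondence, and as $\ell$ increases the target $\lambda/\ell$ decreases. Hence $g_\lambda$ is non-decreasing, and it is strictly increasing when $U$ is strictly concave, since then $U'$ is strictly decreasing. This yields the claimed form $E_\lambda=\min(c_2,\max(c_1,g_\lambda(L)))$. For $U=\log$ it reduces to $g_\lambda(\ell)=\ell/\lambda$, which recovers \Cref{thm:bounded}.

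We then choose $\lambda$. The map $\lambda\mapsto \mathbb{E}_{P_0}[E_\lambda]$ is non-increasing, and we handle three cases.
\begin{itemize}
\item If $\mathbb{E}_{P_0}[\min(c_2,\cdot)]$ with $\lambda\downarrow 0$ gives at most $1$, the bound is slack: $E=c_2$ on $\{L>0\}$ is feasible and optimal.
\item As $\lambda\to\infty$, $E_\lambda\to c_1\le 1$, so feasibility holds in the limit.
\item Otherwise, by monotone convergence and the intermediate value property we pick $\lambda^*$ with $\mathbb{E}_{P_0}[E_{\lambda^*}]=1$.
\end{itemize}
Weak duality then closes the argument. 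For any feasible $E$, pointwise optimality gives
\[
\mathbb{E}_{P_1}U(E)\le \mathcal{L}_{\lambda^*}(E)\le \mathcal{L}_{\lambda^*}(E_{\lambda^*})=\mathbb{E}_{P_1}U(E_{\lambda^*}),
\]
where the first inequality uses $\lambda^*(1-\mathbb{E}_{P_0}E)\ge 0$ and the last uses complementary slackness.

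The main obstacle is this choice of $\lambda^*$ when the $P_0$-law of $L$ has atoms. In that case $\lambda\mapsto\mathbb{E}_{P_0}[E_\lambda]$ may jump, for example where $U$ has kinks and $\partial U$ is set-valued. We would first try to avoid this by working with the continuous clip function, since for strictly concave differentiable $U$ the map $\lambda\mapsto\mathbb{E}_{P_0}[E_\lambda]$ is continuous by dominated convergence, the integrands being bounded by $c_2$. If $U$ is merely concave, we would instead select $g$ within the superdifferential on the atoms; this preserves monotonicity, though strictness of $g$ could then fail. A second, technical point is integrability of $U(E)$ when $c_1=0$. There we would restrict to the set $\{L>0\}$, argue that optimal $E$ is positive $P_1$-a.s., and interpret $U(0)=-\infty$ consistently.
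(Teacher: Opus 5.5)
Your proposal is correct and follows essentially the paper's argument: relax the budget with a multiplier, maximize $\ell U(e)-\lambda e$ pointwise over $[c_1,c_2]$ to obtain a clipped monotone transform of $L$, and conclude by Lagrangian sufficiency once the budget binds. Where you differ, you are if anything more careful. You construct $\lambda^*$ explicitly by an intermediate-value argument, which also yields existence and makes the paper's weak-* compactness step unnecessary. You also tie strictness of $g$ to $U'$ being strictly decreasing, which is genuinely required, whereas the paper's exchange argument tacitly assumes $g(\ell_1)\neq g(\ell_2)$. One small correction to your obstacle paragraph: a kink in $U$ makes $g$ flat rather than causing a jump, and only linear pieces of $U$, combined with atoms of $L$, can make $\lambda\mapsto\mathbb{E}_{P_0}[E_\lambda]$ jump.
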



The space $\mathcal{D}$ is a closed, bounded, and convex subset of $L^\infty(P_0)$. By Banach-Alaoglu theorem, it is weak-* compact. Because $U$ is concave and continuous, the objective functional $J(E)$ is weak-* upper semicontinuous. By the extreme value theorem for weak-* topologies, a global maximum $E^* \in \mathcal{D}$ is guaranteed to exist.

To prove uniqueness, suppose there exist two optimal solutions $E_1, E_2 \in \mathcal{D}$ such that $P_0(E_1 \neq E_2) > 0$. Because $\mathcal{D}$ is convex, the midpoint $E_{mid} = \frac{1}{2}E_1 + \frac{1}{2}E_2$ is strictly feasible ($E_{mid} \in \mathcal{D}$). 
Because $U$ is strictly concave, Jensen's inequality implies:
\begin{equation}
    U\left(\frac{1}{2}E_1(X) + \frac{1}{2}E_2(X)\right) > \frac{1}{2}U(E_1(X)) + \frac{1}{2}U(E_2(X)) \quad \text{on the event} \{E_1 \neq E_2\}.
\end{equation}
Taking the expectation under $P_1$, we obtain $J(E_{mid}) > \frac{1}{2}J(E_1) + \frac{1}{2}J(E_2) = J(E^*)$. This contradicts the optimality of $E_1$ and $E_2$. Thus, the optimal solution $E^*$ must be unique $P_0$-almost surely.

We rewrite the objective function in terms of $P_0$: $\mathbb{E}_{P_1}[U(E)] = \mathbb{E}_{P_0}[L U(E)]$.
We introduce a Lagrange multiplier $\lambda \ge 0$ for the integral constraint $\mathbb{E}_{P_0}[E] \le 1$. We restrict our search to functions within the domain $\mathcal{D}$.
The Lagrangian $\mathcal{L}(E, \lambda)$ is defined as:
\[
\mathcal{L}(E, \lambda) = \mathbb{E}_{P_0}[L U(E)] - \lambda \left( \mathbb{E}_{P_0}[E] - 1 \right)
 = \lambda + \mathbb{E}_{P_0} \left[ LU(E) - \lambda E \right].
\]
To maximize $\mathcal{L}(E, \lambda)$ over $E \in \mathcal{D}$ for a fixed $\lambda > 0$, we maximize the integrand pointwise for each $X \in X$. For any realised values $E=y$ and $L(X)=\ell$, we solve the scalar optimization problem:
\begin{equation}
\label{eq:opti-y}
    \max_{y} \quad \phi(y) = \ell U(y) - \lambda y \quad \text{subject to } y \in [c_1, c_2].
\end{equation}

Since $U$ is concave and strictly increasing, the function $\phi$ is strictly concave. So, there exists a unique (unconstrained) maximizer of $\phi(y)$. Let, for each fixed $g$, the unconstrained global maximum occur at $y=g(l)$. Since we are maximizing a concave function over a closed interval $[c_1, c_2]$, the solution is the projection of the unconstrained maximum onto the interval. This results in three cases:
\begin{enumerate}
    \item If $g(l) < c_1$, the function is decreasing on the interval; maximum is at $c_1$.
    \item If $g(l) > c_2$, the function is increasing on the interval; maximum is at $c_2$.
    \item Otherwise, the maximum is at the interior point $g(l)$.
\end{enumerate}
Thus, the optimal solution for \eqref{eq:opti-y} is $y=\min\left( c_2, \max\left( c_1, g(l) \right) \right)$. Therefore, the optimal solution to the original optimization problem, for a fixed $\lambda$, is:
\[
E_\lambda = \min\left( c_2, \max\left( c_1, g(L(X)) \right) \right).
\]
We establish monotonicity using pure algebraic inequalities derived from the definition of optimality. Let $\ell_1$ and $\ell_2$ be two likelihood ratios such that $0 < \ell_1 < \ell_2$. Let $y_1 = g(\ell_1)$ and $y_2 = g(\ell_2)$ be their corresponding unconstrained maximizers.

By the definition of $y_1$ as the unique maximizer for $\ell_1$, evaluating the objective function at $y_1$ must yield a value greater than evaluating it at $y_2$:
\begin{equation}
    \ell_1 U(y_1) - \lambda y_1 > \ell_1 U(y_2) - \lambda y_2.
    \label{eq:opt1}
\end{equation}
Similarly, by the definition of $y_2$ as the unique maximizer for $\ell_2$, evaluating the objective function at $y_2$ must yield a value greater than or equal to evaluating it at $y_1$:
\begin{equation}
    \ell_2 U(y_2) - \lambda y_2> \ell_2 U(y_1) - \lambda y_1.
    \label{eq:opt2}
\end{equation}

We can rearrange (\ref{eq:opt1}) toget:
\begin{equation}
    \lambda(y_1 - y_2)<\ell_1 (U(y_1) - U(y_2)).
    \label{eq:rearrange1}
\end{equation}
We can rearrange (\ref{eq:opt2}) to isolate the penalty terms in the exact same direction:
\begin{equation}
    \ell_2 (U(y_1) - U(y_2))<\lambda(y_1 - y_2).
    \label{eq:rearrange2}
\end{equation}

By chaining the inequalities (\ref{eq:rearrange1}) and (\ref{eq:rearrange2}) together, the $\lambda$ terms are eliminated, yielding:
\begin{equation}
    \ell_2 (U(y_1) - U(y_2)) \le \ell_1 (U(y_1) - U(y_2)).
\end{equation}
Rearranging this to group by the utility differences gives:
\begin{equation}
    (\ell_2 - \ell_1) (U(y_1) - U(y_2))< 0.
\end{equation}

By our initial assumption, $\ell_2 > \ell_1$, meaning the term $(\ell_2 - \ell_1)$ is strictly positive. Therefore, for the product to be non-positive, we must have:
\begin{equation}
    U(y_1) - U(y_2) < 0 \implies U(y_1) < U(y_2).
\end{equation}
Because the utility function $U$ is strictly increasing, $U(y_1) < U(y_2)$ directly implies:
\begin{equation}
    y_1 < y_2.
\end{equation}
Since $\ell_1 < \ell_2 \implies g(\ell_1)< g(\ell_2)$, the function $g(\ell)$ istrictly increasing.

Now, since the utility function $U(\cdot)$ is strictly increasing, the optimal solution must saturate the budget constraint:
\[
\mathbb{E}_{P_0}[E_{\lambda^*}] = 1.
\]
The random variable $E^*$ constructed with $\lambda^*$ maximizes the Lagrangian over the domain $\mathcal{D}$ and satisfies all constraints. By the Lagrange Sufficiency Theorem, $E^*$ is the global maximizer.

For $U$ is $\log,$ simple calculation shows $g(l)=\frac{\ell}{\lambda}$ and hence, 
\[
E^* = \min\left( c_2, \max\left( c_1,\frac{L(X)}{\lambda} \right) \right).
\]

\end{proof}

\subsection{Omitted proofs from Section 5}
\label{proof-sec5}
\begin{proof}[Proof of \Cref{thm:convex-constraint}]
We prove a more general version of \Cref{thm:convex-constraint} using a general utility function $U:(0,\infty)\mapsto\mathbb {R}$, which is concave and strictly increasing. Then we show that \Cref{thm:convex-constraint} follows by plugging in $U=\log$ below.

We consider the optimization problem
\begin{equation}
\begin{aligned}
& \underset{E}{\text{maximize}}
& & \mathbb{E}_{P_1}[U(E)] \\
& \text{subject to}
& & \mathbb{E}_{P_0}[E] \le 1, \quad \mathbb{E}_{P_0}[\phi(E)] \le C.
\end{aligned}
\end{equation}
\begin{theorem}
\label{thm:convex-constraint-gen}
 Let  $U:(0,\infty)\mapsto\mathbb {R}$ be a concave and strictly increasing function and $\phi : (0, \infty) \to \mathbb{R}$ be strictly convex and superlinear: $\lim_{x \to \infty} \frac{\phi(x)}{x} = \infty.$
 Then for the above optimization problem,
\begin{itemize}
    \item[(i)] There exists a unique maximizer $E^*$ (up to $P_0$-a.s.).
    \item[(ii)] 
    There exists an increasing function $\psi : (0, \infty) \to (0, \infty)$ such that $E^* = \psi(L) \quad P_0\text{-a.s.}$
\end{itemize} 
\end{theorem}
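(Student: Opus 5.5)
We prove (i) and (ii) for \Cref{thm:convex-constraint-gen} by working in $L^1(P_0)$ and then using a Lagrangian argument, in the same spirit as the proof of \Cref{thm:bounded-gen}.

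\textbf{Existence.} Let $\mathcal{F}=\{E>0:\mathbb{E}_{P_0}[E]\le 1,\ \mathbb{E}_{P_0}[\phi(E)]\le C\}$, which is convex. Since $\phi$ is superlinear and bounded below on bounded sets, the de la Vallée-Poussin criterion makes $\mathcal{F}$ uniformly integrable. By Dunford–Pettis it is therefore relatively weakly compact in $L^1(P_0)$.

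To show closedness, take $E_n\in\mathcal{F}$ with $\sup_n\mathbb{E}_{P_1}[U(E_n)]$ approaching the supremum. By Komlós' theorem, or by passing to forward convex combinations (Mazur), we obtain $\tilde E_n\to E^*$ $P_0$-a.s. These combinations stay in $\mathcal{F}$ by convexity, and concavity of $U$ keeps their objective at least as large. Fatou's lemma with $\phi$ bounded below gives $\mathbb{E}_{P_0}[\phi(E^*)]\le C$, and Fatou also gives $\mathbb{E}_{P_0}[E^*]\le1$.

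\textbf{Main obstacle.} The hard step is upper semicontinuity of $E\mapsto\mathbb{E}_{P_0}[L\,U(E)]$ along this sequence. I would bound $U(e)\le U(1)+U'_+(1)(e-1)$, so that $L\,U(\tilde E_n)^+\le L\,(a+b\tilde E_n)$. The family $\{\tilde E_n\}$ is uniformly integrable under $P_0$, but $L$ may be unbounded, so it is not immediate that $\{L\tilde E_n\}$ is uniformly integrable.

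I would handle this by truncating: split on the events $\{L\le M\}$ and $\{L>M\}$. On $\{L>M\}$, use $\mathbb{E}_{P_0}[L\tilde E_n\mathbf 1_{L>M}]$, controlled by Young's inequality with the convex conjugate $\phi^*$ of $\phi$. This requires $\mathbb{E}_{P_0}[\phi^*(L/k)]<\infty$. If that is not available, I would instead assume the optimal value is finite and use reverse Fatou only on the part where $U\ge0$.

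After this step we have $\mathbb{E}_{P_1}[U(E^*)]\ge\limsup_n \mathbb{E}_{P_1}[U(\tilde E_n)]$, so $E^*$ is optimal. Positivity of $E^*$ under $P_1$ follows from finiteness of the objective when $U=\log$.

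\textbf{Uniqueness.} This follows from strict convexity of $\phi$. If $E_1\neq E_2$ are both optimal, their midpoint is feasible and satisfies the $\phi$-constraint strictly on $\{E_1\neq E_2\}$. Adding a small multiple of a perturbation, supported where $L>0$, then increases the objective.

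If $U$ is strictly concave, the argument is the same as in \Cref{thm:bounded-gen}. For merely concave $U$, I would use the slack argument just described.

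\textbf{Monotone form.} Slater's condition holds: a constant $E\equiv c<1$ with $\phi(c)<C$ can be assumed to exist. Convex duality then gives multipliers $\lambda,\gamma\ge0$ such that $E^*$ maximizes $\mathbb{E}_{P_0}[L\,U(E)-\lambda E-\gamma\phi(E)]$ over all positive $E$. Maximizing pointwise, $E^*(x)$ is the maximizer of $\ell U(y)-\lambda y-\gamma\phi(y)$ at $\ell=L(x)$. When $\gamma>0$ or $U$ is strictly concave, this function is strictly concave in $y$, so the maximizer $\psi(\ell)$ is unique.

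The exchange inequalities from the proof of \Cref{thm:bounded-gen} show that $\ell\mapsto\psi(\ell)$ is increasing. Finally, when $U=\log$ and $\phi$ is differentiable, the first-order condition is $\ell/y=\lambda+\gamma\phi'(y)$. This gives the stated relation $L/E^*=\lambda+\gamma\phi'(E^*)$ $P_0$-a.s.
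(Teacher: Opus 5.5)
\textbf{Verdict: genuine gaps.} Your outline is the paper's, and the Koml\'os/Mazur version of the compactness step is fine. That outline is: de la Vall\'ee-Poussin and Dunford--Pettis for compactness, Fatou for the constraints, then KKT multipliers, pointwise maximization of a strictly concave integrand, and the exchange inequalities. Part (ii) then goes through as in the paper once a Slater point is granted; the paper's appeal to a ``generalized KKT theorem'' silently needs the same assumption.

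\textbf{First gap: upper semicontinuity.} The step you flag yourself, upper semicontinuity of $E\mapsto\mathbb{E}_{P_0}[L\,U(E)]$, is not closed by either of your patches.
\begin{itemize}
\item The $\phi^*$ route is an extra hypothesis, and one $k$ is not enough. Young gives $\mathbb{E}_{P_0}[L\tilde E_n\mathbf 1_{\{L>M\}}]\le k\sup_n\mathbb{E}_{P_0}[\phi(\tilde E_n)^+]+k\,\mathbb{E}_{P_0}[\phi^*(L/k)\mathbf 1_{\{L>M\}}]$. Only the second term shrinks as $M\to\infty$, so you need $\mathbb{E}_{P_0}[\phi^*(L/k)]<\infty$ for arbitrarily small $k>0$.
\item The fallback does not work. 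Finiteness of the value does not make $\{L\,U(\tilde E_n)^+\}$ uniformly integrable under $P_0$. That is exactly what reverse Fatou needs to stop value escaping to $\{L>M\}$ in the a.s.\ limit.
\end{itemize}
The paper passes over the same step (``by Fatou's Lemma and the concavity of the $U$''). At the stated generality the theorem is false. Take $U(e)=e+\log e$ (strictly concave, strictly increasing), $\phi(x)=x^2$, $C=2$, $P_0=\mathrm{Unif}(0,1)$ and $L(x)=\tfrac12x^{-1/2}$. Let $S_M=\mathbb{E}_{P_0}[L^2\mathbf 1_{\{L\le M\}}]\uparrow\infty$. Then $E_M=\tfrac12+S_M^{-1/2}L\mathbf 1_{\{L\le M\}}$ is feasible once $S_M\ge 4$, and $\mathbb{E}_{P_1}[U(E_M)]\ge\tfrac12-\log 2+S_M^{1/2}\to\infty$. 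So (i) fails.

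Your $\phi^*$-condition is the right kind of missing hypothesis: here $\phi^*(y)=y^2/4$ and $\mathbb{E}_{P_0}[L^2]=\infty$.

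For $U=\log$, which is what \Cref{thm:convex-constraint} needs, the step closes under $\kl(P_1\|P_0)<\infty$ alone. Write $\log\tilde E_n=\log L+\log(\tilde E_n/L)$ $P_1$-a.s. Since $\mathbb{E}_{P_1}[\tilde E_n/L]\le\mathbb{E}_{P_0}[\tilde E_n]\le 1$, Jensen for $\log(1+\cdot)$ gives $\mathbb{E}_{P_1}[\log^+(\tilde E_n/L)\mathbf 1_A]\le P_1(A)\log(1+1/P_1(A))$. So these positive parts are uniformly integrable under $P_1$, and reverse Fatou applies with no condition on $\phi^*$.

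\textbf{Second gap: uniqueness for merely concave $U$.} If the budget constraint is active at the midpoint $E_m$, you cannot add a nonnegative perturbation. A mean-zero perturbation raises $\mathbb{E}_{P_0}[L\,U(E)]$ only if $L\,U'(E_m)$ is not $P_0$-a.s.\ constant.

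When it is constant, uniqueness is false. With $P_1=P_0=\mathrm{Unif}(0,1)$, $U(e)=e$, $\phi(x)=x^2$, $C=2$, both $E\equiv 1$ and $E=\tfrac12+\mathbf 1_{\{x>1/2\}}$ are optimal.

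The paper's uniqueness argument uses strict concavity of $J$ ``because the logarithm is strictly concave''. It therefore really covers only strictly concave $U$, and you should add that hypothesis.

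Even then, strict concavity of $U$ identifies $E^*$ only $P_1$-a.s. On $\{L=0\}$ your slack idea is what is needed, and the paper's proof omits it. The midpoint has $\phi$-slack. Lower it slightly on a part of $\{L=0\}$ where $E_m$ is bounded away from $0$ and $\infty$, and spend the freed budget on $\{L>0\}$; this strictly improves the objective.
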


\textbf{Part (i):}
Let $\mathcal{C}$ denote the feasible set of random variables:
\[ \mathcal{C} = \{ E \in L^1(P_0) : E \ge 0 \ P_0\text{-a.s.}, \ \mathbb{E}_{P_0}[E] \le 1, \ \mathbb{E}_{P_0}[\phi(E)] \le C \}. \]
Because the expectation operator is linear and $\phi$ is convex, $\mathcal{C}$ is a convex subset of $L^1(P_0)$. The objective functional $J(E) = \mathbb{E}_{P_1}[U(E)] = \mathbb{E}_{P_0}[L U(E)]$ is strictly concave because the logarithm is strictly concave. The supremum of a strictly concave functional over a convex set is achieved by at most one point. Thus, if a maximizer $E^*$ exists, it is unique up to $P_0$-almost sure equivalence.

To prove existence, we use the superlinearity condition $\lim_{e \to \infty} \phi(x)/x = \infty$ combined with the bounded integral $\mathbb{E}_{P_0}[\phi(E)] \le C$, which implies, by the de la Vall\'ee-Poussin theorem, that the family of random variables $\mathcal{C}$ is uniformly integrable. By the Dunford-Pettis theorem, uniform integrability ensures that $\mathcal{C}$ is relatively weakly compact in $L^1(P_0)$. Since $\mathcal{C}$ is also convex and strongly closed (which follows from Fatou's Lemma), it is weakly closed. Therefore, $\mathcal{C}$ is weakly compact in $L^1(P_0)$.

The functional $J(E) = \mathbb{E}_{P_0}[L U(E)]$ is upper semi-continuous with respect to the weak topology on $L^1(P_0)$ (again, by Fatou's Lemma and the concavity of the $U$). An upper semi-continuous functional defined on a weakly compact set achieves its supremum. Thus, a global maximizer $E^* \in \mathcal{C}$ exists.

\vspace{1em}
\textbf{Part (ii):} By the generalized Karush-Kuhn-Tucker (KKT) theorem (or the method of Lagrange multipliers for infinite-dimensional spaces), there exist scalar multipliers $\lambda \ge 0$ and $\gamma \ge 0$ such that $E^*$ maximizes the unconstrained Lagrangian:
\[ \mathcal{L}(E) = \mathbb{E}_{P_0}[L U(E)] - \lambda (\mathbb{E}_{P_0}[E] - 1) - \gamma (\mathbb{E}_{P_0}[\phi(E)] - C). \]
We can rewrite the Lagrangian by grouping the terms inside the expectation:
\[ \mathcal{L}(E) = \mathbb{E}_{P_0} \left[ L U( E) - \lambda E - \gamma \phi(E) \right] + \lambda + \gamma C. \]
To maximize this functional, we can optimize the expression inside the expectation pointwise $P_0$-almost surely. For any realization where $L = l$ and we choose $E = x$, the pointwise optimization problem is:
\[ \max_{x > 0} f_l(x), \quad \text{where } f_l(x)=l U(x) - \lambda x - \gamma \phi(x). \]
Since $\phi$ is strictly convex and $U$ is strictly concave, $f_l$ is a strictly concave function, and hence the above optimization problem has a unique solution at $x=\psi(l)$. For $l_1>l_2$, we need to show $\psi(l_1)>\psi(l_2)$. 

Assume $l_1 > l_2$. Let $x_1 = \psi(l_1)$ be the unique maximizer for $l_1$ and $x_2 = \psi(l_2)$ be the unique maximizer for $l_2$. By the definition of a unique maximizer, we know:$f_{l_1}(x_1) > f_{l_1}(x_2)$ and $f_{l_2}(x_2) > f_{l_2}(x_1)$.
Expanding the two inequalities above using the definition of $f_l(x)$:

(1) $l_1 U(x_1) - \lambda x_1 - \gamma \phi(x_1) > l_1 U(x_2) - \lambda x_2 - \gamma \phi(x_2)$

(2) $l_2 U(x_2) - \lambda x_2 - \gamma \phi(x_2) > l_2 U(x_1) - \lambda x_1 - \gamma \phi(x_1)$

Now, let's rearrange both to group the $U(x)$ terms on one side and the terms ($\lambda x + \gamma \phi(x)$) on the other:

(1) $l_1 [U(x_1) - U(x_2)] > [\lambda x_1 + \gamma \phi(x_1)] - [\lambda x_2 + \gamma \phi(x_2)]$

(2) $l_2 [U(x_2) - U(x_1)] > [\lambda x_2 + \gamma \phi(x_2)] - [\lambda x_1 + \gamma \phi(x_1)]$

Multiply inequality (2) by $-1$ (which flips the inequality sign) to make the right-hand side match inequality (1):$$-l_2 [U(x_2) - U(x_1)] < -([\lambda x_2 + \gamma \phi(x_2)] - [\lambda x_1 + \gamma \phi(x_1)]).$$
This implies:$$0< (l_1-l_2) [U(x_1) - U(x_2)].$$
Since $l_1>l_2$, we must have $U(x_1) >U(x_2)$. Now, $U$ is strictly increasing. Therefore, $x_1>x_2$, that is $\psi(\ell)$ a strictly increasing function of $\ell$.
Therefore, the opmila e-value for the original problem is 
\[E^*=\psi(L), \text{ for some strictly increasing function } \psi.\]
\paragraph{Special case when $U$ is $\log$:} The unconstrained maximum is found by setting the first derivative with respect to $x$ to zero:
\[ \frac{l}{x} - \lambda - \gamma \phi'(x) = 0. \]
Rearranging this gives the required first-order condition for the random variables:
\[ \frac{L}{E^*} = \lambda + \gamma \phi'(E^*) \quad P_0\text{-a.s.} \]
From the first-order condition, we can express the likelihood ratio $L$ as a function of $E^*$ $P_0$-almost surely:
\[ L = E^* \left( \lambda + \gamma \phi'(E^*) \right) \quad P_0\text{-a.s.} \]
\end{proof}

\subsection{Omitted proofs from Section 6}
\label{proof-sec6}
\begin{proof}[Proof of \Cref{thm:comp-gen}]
We prove a more general version of \Cref{thm:comp-gen} using a general utility function $U:(0,\infty)\mapsto\mathbb {R}$, which is concave and strictly increasing. The prrof of \Cref{thm:comp-gen} follows by plugging in $U=\log$ below.

\begin{theorem}
\label{thm:gen} Suppose that $U:(0,\infty)\mapsto\mathbb {R}$ is a concave and strictly increasing function.
Let $E^*$ be the optimal solution for the simple hypothesis pair $(P_0^*, P_1^*)$, defined as:
\[
E^* = \argmax_{E \in \mathcal{E}^\prime(\{P_0^*\})}  \mathbb{E}_{P_1^*}[U(E)].
\]
Assume that it is of the form $E^*=\psi(L^*),$ for some non-decreasing function $\psi$.
Then $E^*\in\mathcal{E}^\prime(\mathcal{P}_0)$ and
\begin{align*}
    \sup_{E \in \mathcal{E}^\prime(\mathcal{P}_0)} \inf_{P_1 \in \mathcal{P}_1} \mathbb{E}_{P_1}[U(E)] = &\inf_{P_1 \in \mathcal{P}_1} \mathbb{E}_{P_1}[U(E^*)]\\
    &= \mathbb{E}_{P_1^*}[U(E^*)].
\end{align*}
\end{theorem}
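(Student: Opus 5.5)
The plan is to reduce everything to the stochastic-dominance property of an LFD pair, in the same way the proof of \Cref{thm:bin-opt-comp} does. The key fact, which follows from Huber's definition of an LFD pair in terms of the risk \eqref{risk}, is that $L^*$ is stochastically smallest under $P_1^*$ among $\mathcal{P}_1$ and stochastically largest under $P_0^*$ among $\mathcal{P}_0$. Concretely, for every threshold $s$ we have $P_1(L^*>s)\ge P_1^*(L^*>s)$ and $P_0(L^*>s)\le P_0^*(L^*>s)$. Taking likelihood-ratio tests that reject when $L^*>s$, with randomization at the boundary where needed, extends this to $\mathbb{E}_{P_1}[f(L^*)]\ge \mathbb{E}_{P_1^*}[f(L^*)]$ and $\mathbb{E}_{P_0}[f(L^*)]\le \mathbb{E}_{P_0^*}[f(L^*)]$ for every non-decreasing $f$. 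To pass from indicators to general $f$, I will write $f(\ell)=f(0)+\int \mathbbm{1}(\ell>s)\,df(s)$, as a layer-cake or Stieltjes representation, and use Fubini.

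\emph{Step 1 (membership).} Apply the $P_0$-inequality with $f=\psi$. This gives $\mathbb{E}_{P_0}[E^*]\le \mathbb{E}_{P_0^*}[E^*]\le 1$ for all $P_0\in\mathcal{P}_0$, so $E^*$ is a valid e-variable for $\mathcal{P}_0$. For the structural constraint itself, I will assume, as holds in all the applications of Sections~3--5, that $\mathcal{E}'$ is defined by a pointwise property (binary values, boundedness in $[c_1,c_2]$) together with the validity requirement, or more generally that the constraint is monotone in the following sense: whenever $E=\psi(L^*)$ satisfies it under $P_0^*$, it satisfies it under every $P_0\in\mathcal{P}_0$. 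For a convex integral constraint $\mathbb{E}_{P_0}[\phi(E)]\le C$, this needs $\phi\circ\psi$ to be non-decreasing, and I will state that caveat explicitly. I regard making this precise as the main obstacle, because the theorem as stated leaves $\mathcal{E}'$ abstract, so some monotonicity hypothesis on the constraint class must be made explicit.

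\emph{Step 2 (lower bound).} Since $U$ is increasing and $\psi$ is non-decreasing, $U\circ\psi$ is non-decreasing. The $P_1$-inequality then gives $\mathbb{E}_{P_1}[U(E^*)]\ge \mathbb{E}_{P_1^*}[U(E^*)]$ for all $P_1\in\mathcal{P}_1$. Hence $\inf_{P_1}\mathbb{E}_{P_1}[U(E^*)]=\mathbb{E}_{P_1^*}[U(E^*)]$, with equality attained at $P_1^*$. Combined with Step 1, this shows that the sup-inf is at least $\mathbb{E}_{P_1^*}[U(E^*)]$.

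\emph{Step 3 (upper bound).} Take any $E\in\mathcal{E}'(\mathcal{P}_0)$. Then $E\in\mathcal{E}'(\{P_0^*\})$, since being valid under the larger null $\mathcal{P}_0$ implies validity under the single distribution $P_0^*$. It follows that
\[
\inf_{P_1\in\mathcal{P}_1}\mathbb{E}_{P_1}[U(E)]\le \mathbb{E}_{P_1^*}[U(E)]\le \mathbb{E}_{P_1^*}[U(E^*)],
\]
where the last inequality is the defining optimality of $E^*$. Taking the supremum over $E$ and combining with Step 2 closes the chain of equalities.
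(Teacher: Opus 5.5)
Your proposal is correct and matches the paper's proof step for step: validity of $E^*$ by applying the dominance of $L^*$ under $P_0^*$ to $\psi$; the worst case over $\mathcal{P}_1$ attained at $P_1^*$ by applying dominance to $U\circ\psi$; and the upper bound from $P_0^*\in\mathcal{P}_0$ together with the simple-versus-simple optimality of $E^*$. Your two additions are genuine tightenings rather than a different route: you derive the dominance inequalities from Huber's definition via threshold likelihood-ratio tests and a layer-cake argument, where the paper simply asserts them, and you make explicit that the structural part of $\mathcal{E}'$ must transfer from $P_0^*$ to every $P_0\in\mathcal{P}_0$ (the paper only checks $\mathbb{E}_{P_0}[E^*]\le 1$, which does not cover, e.g., $\mathbb{E}_{P_0}[\phi(E^*)]\le C$ unless $\phi\circ\psi$ is non-decreasing).
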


The proof proceeds in three steps: establishing validity, determining worst-case performance, and proving optimality.

We first show that $E^* \in \mathcal{E}^\prime(\mathcal{P}_0)$. 
$\psi(x)$ is a {non-decreasing} function of $x$.
Since for all $P_0 \in \mathcal{P}_0$, $L^*$ under $P_0^*$ stochastically dominates $L^*$ under $P_0$, we have
\[
  \mathbb{E}_{P_0}[E^*] = \mathbb{E}_{P_0}[\psi(L^*)] \le \mathbb{E}_{P_0^*}[\psi(L^*)].
  \]
By the definitiopn, we know $\mathbb{E}_{P_0^*}[E^*] \leq 1$. Therefore:
  \[
    \mathbb{E}_{P_0}[E^*] \le 1 \quad \forall P_0 \in \mathcal{P}_0.
    \]
Thus, $E^*$ is a valid e-variable for the composite null.

Next, we evaluate the worst-case growth rate of $E^*$ under the alternative $\mathcal{P}_1$.
Define the function $h(x) = U(\psi(x))$. Since $U(\cdot)$ is strictly increasing and $\psi(\cdot)$ is non-decreasing, the composite function $h(x)$ is {non-decreasing}.
Since for all $P_1 \in \mathcal{P}_1$, $L^*$ under $P_1$ stochastically dominates $L^*$ under $P_1^*$, we have:
  \[
    \mathbb{E}_{P_1}[U(E^*)] = \mathbb{E}_{P_1}[h(L^*)] \ge \mathbb{E}_{P_1^*}[h(L^*)] = \mathbb{E}_{P_1^*}[U(E^*)].
    \]
This inequality holds for all $P_1 \in \mathcal{P}_1$. Therefore, the infimum occurs at the least favorable distribution:
  \begin{equation}
\label{eq:1}
\inf_{P_1 \in \mathcal{P}_1} \mathbb{E}_{P_1}[U(E^*)] = \mathbb{E}_{P_1^*}[U(E^*)].
\end{equation}

Finally, we show that no other candidate $E \in \mathcal{E}^\prime(\mathcal{P}_0)$ can achieve a higher worst-case growth rate.
Since $P_0^* \in \mathcal{P}_0$, $E$ must satisfy the validity constraint for this specific distribution:
  \[ \mathbb{E}_{P_0^*}[E] \le 1.\]
Furthermore, the worst-case growth of $E$ over the entire set $\mathcal{P}_1$ is bounded above by its growth against the specific distribution $P_1^*$:
  \[ \inf_{P_1 \in \mathcal{P}_1} \mathbb{E}_{P_1}[U(E)] \le \mathbb{E}_{P_1^*}[U(E)]. \]

Consider the simple hypothesis testing problem $P_0^*$ vs $P_1^*$. From our assumption,

$E^* = \argmax_{E \in \mathcal{E}^\prime(\{P_0\})}  \mathbb{E}_{P_1^*}[U(E)]$. Thus:
  \[ \mathbb{E}_{P_1^*}[U(E)] \le \mathbb{E}_{P_1^*}[U(E^*)]. \]

Combining these inequalities:
  \[
    \inf_{P_1 \in \mathcal{P}_1} \mathbb{E}_{P_1}[U(E)] \le \mathbb{E}_{P_1^*}[U(E)] \le \mathbb{E}_{P_1^*}[U(E^*)] = \inf_{P_1 \in \mathcal{P}_1} \mathbb{E}_{P_1}[U(E^*)].
    \]
Since this holds for any $E \in \mathcal{E}_b(\mathcal{P}_0)$, $E^*$ maximizes the worst-case growth rate, i.e.,
\begin{equation}
\label{eq:2}
\sup_{E \in \mathcal{E}_b(\mathcal{P}_0)} \inf_{P_1 \in \mathcal{P}_1} \mathbb{E}_{P_1}[U(E)] = \inf_{P_1 \in \mathcal{P}_1} \mathbb{E}_{P_1}[U(E^*)].
\end{equation}
Combining \eqref{eq:1} and \eqref{eq:2}, we obtain the desired result.
\end{proof}

\section{Counterexample with no LFD}
Consider the counterexample in \Cref{sec:discussion} with $E^*$ and $E^\prime$ defined in \eqref{q:e-star} and \eqref{eq:e-prime} respectively. Then, the next result shows that one can choose $\mu$ and $c$ such that $E^\prime$ has a strictly larger growth rate than that of $E^*$.
\begin{proposition}
\label{prop:counterexample}
For any $c>1$ and $\mu \in (0, 1/2)$ such that $ \mu + \frac{c-1}{\lambda^*}<1$, we have
    \begin{equation*}
    \mathbb{E}_{{Q}}[\log E^\prime] > \mathbb{E}_{{Q}}[\log E^*].
\end{equation*}
\end{proposition}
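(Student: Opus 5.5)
The plan is to show that $E^*$ in \eqref{q:e-star} is itself a member of the one-parameter family $\{\min(c,1+\lambda(X-\mu)) : \lambda\ge 0\}$ over which $E^\prime$ is optimal, namely the member with $\lambda=\lambda^*$. It then suffices to show that $\lambda^*$ is \emph{not} a maximizer of $g(\lambda):=\mathbb{E}_{Q}[\log\min\{c,1+\lambda(X-\mu)\}]$. The strict inequality then follows from $\mathbb{E}_Q[\log E^\prime]=\sup_{\lambda}g(\lambda)>g(\lambda^*)$.

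\textbf{Step 1: identify the normalization $\gamma$ and check feasibility.} Validity of any candidate of the form $f(X)=\min(c,(1+\lambda(X-\mu))/\gamma)$ is checked over the null $\mathcal{P}=\{P:\mathbb{E}_P[X]\le\mu\}$. For $0\le\lambda\le 1/\mu$, $f$ is nonnegative on $[0,1]$, concave, and non-decreasing in $x$. Jensen's inequality together with monotonicity then gives $\sup_{P\in\mathcal{P}}\mathbb{E}_P[f(X)]=f(\mu)=\min(c,1/\gamma)$, with equality at the point mass $\delta_\mu$. Since $c>1$, the scaling that makes $E^*$ an exact e-variable (supremum equal to $1$) is $\gamma=1$. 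Hence $E^*=\min(c,1+\lambda^*(X-\mu))$, which is the member of the family at $\lambda=\lambda^*$. The same computation shows that every $\min(c,1+\lambda(X-\mu))$ with $\lambda\in[0,1/\mu]$ is a valid e-variable for $\mathcal{P}$, so $E^\prime$ is feasible and $g(\lambda^*)\le\mathbb{E}_Q[\log E^\prime]$.

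\textbf{Step 2: derivative of $g$ at $\lambda^*$.} Next I differentiate $g$ at $\lambda^*$ by dominated convergence. On a neighbourhood of $\lambda^*\in(0,1/\mu)$ the integrand is Lipschitz in $\lambda$ with an integrable bound, and the kink set $\{X=\mu+(c-1)/\lambda\}$ is $Q$-null because $Q$ is Uniform(0,1). This gives
\[
g^\prime(\lambda^*)=\mathbb{E}_Q\!\left[\frac{X-\mu}{1+\lambda^*(X-\mu)}\,\mathbb{1}\{X<\mu+(c-1)/\lambda^*\}\right].
\]
Because $\lambda^*$ is an interior maximizer of the untruncated objective, the first-order condition $\mathbb{E}_Q[(X-\mu)/(1+\lambda^*(X-\mu))]=0$ holds. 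Subtracting, we get
\[
g^\prime(\lambda^*)=-\,\mathbb{E}_Q\!\left[\frac{X-\mu}{1+\lambda^*(X-\mu)}\,\mathbb{1}\{X>\mu+(c-1)/\lambda^*\}\right].
\]
On the event $\{X>\mu+(c-1)/\lambda^*\}$ the integrand is strictly positive, since $X>\mu$. The hypothesis $\mu+(c-1)/\lambda^*<1$ guarantees that this event has positive Uniform(0,1) probability. Therefore $g^\prime(\lambda^*)<0$.

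\textbf{Step 3: conclude.} Since $g^\prime(\lambda^*)<0$, for small $\delta>0$ we have $g(\lambda^*-\delta)>g(\lambda^*)$. The value $\lambda^*-\delta$ still lies in $[0,1/\mu]$, so it is feasible by Step 1. Hence $\mathbb{E}_Q[\log E^\prime]\ge g(\lambda^*-\delta)>g(\lambda^*)=\mathbb{E}_Q[\log E^*]$.

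\textbf{Expected obstacles.} The main care point is Step 1. One must pin down $\gamma$ correctly, since the normalization in \eqref{q:e-star} is only implicitly defined. It is also worth noting the role of $c>1$: it ensures that the clip does not bind at $X=\mu$, so the worst-case null mean equals $1/\gamma$. The differentiation under the integral in Step 2 is routine, given the continuity of $Q$.
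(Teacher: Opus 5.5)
Your proposal is correct and follows the paper's proof. You fix $\gamma=1$, differentiate the truncated objective at $\lambda^*$, and subtract the untruncated first-order condition to obtain a strictly negative derivative coming from the tail $\{X>\mu+(c-1)/\lambda^*\}$. The differences are minor but tighten the argument. Your Jensen argument verifies validity over the whole null, including feasibility of $E^\prime$, whereas the paper only uses $\delta_\mu$ to get $\gamma\ge 1$. You also conclude from the global optimality of $\lambda_{new}$ together with a local decrease at $\lambda^*$, rather than from the paper's strict-concavity computation for $J_{con}$.
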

To construct a concrete counterexample, let us
take $\mu=0.25$ and $c=3$. Solving the first-order equation numerically yields $\lambda^*\approx 3.6$, and the condition $ \mu + \frac{c-1}{\lambda^*}<1$ is satisfied. In this case, the above result indicates that the ``optimize–then–constrain'' approach yields a strictly suboptimal solution. 
\begin{proof}[Proof of Proposition \ref{prop:counterexample}]
    \textbf{Step 1: Bounding the Multiplier $\gamma$.} \\
To ensure $E^*$ is a valid e-variable for the entire composite class $\mathcal{P}$, it must be valid for the Dirac measure $\delta_\mu \in \mathcal{P}$ (a point mass at $Z = \mu$). 
Evaluating $E^*$ under $\delta_\mu$ requires $\mathbb{E}_{\delta_\mu}[E^*] = E^*(\mu) \le 1$. Thus:
\begin{equation}
    \min\left(c, \frac{1}{\gamma}(1 + 0)\right) \le 1 \implies \frac{1}{\gamma} \le 1 \implies \gamma \ge 1.
\end{equation}
Because the log-growth objective is strictly decreasing in $\gamma$, the most competitive valid version of the projected e-variable sets $\gamma = 1$. Thus, we evaluate:
\begin{equation}
    E^*(Z) = \min\left(c, 1 + \lambda^*(Z - \mu) \right).
\end{equation}

\textbf{Step 2: Differentiating the Constrained Objective.} \\
We must prove that $\lambda_{new} \neq \lambda^*$. We do this by evaluating the derivative of the constrained objective $J_{con}(\lambda)$ at the unconstrained optimal point $\lambda^*$. 

Let $z_c(\lambda) = \mu + \frac{c-1}{\lambda}$ be the threshold where the truncation becomes active (i.e., $1 + \lambda(Z-\mu) = c$). The objective function splits into two regions:
\begin{equation}
    J_{con}(\lambda) = \int_0^{z_c(\lambda)} \log(1 + \lambda(z - \mu)) dQ(z) + \int_{z_c(\lambda)}^1 \log(c) dQ(z).
\end{equation}
By Leibniz's integral rule, the boundary terms generated by differentiating the limits of integration perfectly cancel out because the integrand is continuous at $z_c(\lambda)$ (specifically, $z_c(\lambda) = \mu + \frac{c-1}{\lambda}$ or $1 + \lambda(z_c(\lambda)-\mu) = c$). Thus, the derivative with respect to $\lambda$ is s
\begin{equation}
\label{eq:first-deriv}
    J_{con}'(\lambda) =\log c\cdot \frac{d}{d\lambda}z_c(\lambda)+ \int_0^{z_c(\lambda)} \frac{z - \mu}{1 + \lambda(z - \mu)} dQ(z)-\log c\cdot \frac{d}{d\lambda}z_c(\lambda)=\int_0^{z_c(\lambda)} \frac{z - \mu}{1 + \lambda(z - \mu)} dQ(z).
\end{equation}

\textbf{Step 3: Evaluating the Gradient at $\lambda^*$.} \\
By definition, $\lambda^*$ is the unique root of the unconstrained objective's derivative. Thus:
\begin{equation}
    \int_0^1 \frac{z - \mu}{1 + \lambda^*(z - \mu)} dQ(z) = 0.
\end{equation}
We can split this unconstrained integral at the truncation threshold $z_c(\lambda^*)$:
\begin{equation}
    \int_0^{z_c(\lambda^*)} \frac{z - \mu}{1 + \lambda^*(z - \mu)} dQ(z) + \int_{z_c(\lambda^*)}^1 \frac{z - \mu}{1 + \lambda^*(z - \mu)} dQ(z) = 0.
\end{equation}
Notice that the first term is exactly $J_{con}'(\lambda^*)$. Rearranging yields:
\begin{equation}
    J_{con}'(\lambda^*) = - \int_{z_c(\lambda^*)}^1 \frac{z - \mu}{1 + \lambda^*(z - \mu)} dQ(z).
\end{equation}

Assume the constraint $c$ is active, meaning $z_c(\lambda^*) < 1$. Because $c > 1$, we know $z_c(\lambda^*) = \mu + \frac{c-1}{\lambda^*} > \mu$. Therefore, over the entire domain of integration $(z_c(\lambda^*), 1]$, it holds that $z > \mu$. 
Consequently, the integrand $\frac{z - \mu}{1 + \lambda^*(z - \mu)}$ is strictly positive. Also, we assumed that $z_c(\lambda^*)=\mu + \frac{c-1}{\lambda^*}<1$ and hence $Q$ has positive support on this interval $(z_c(\lambda^*), 1]$, the integral is strictly positive, giving:
\begin{equation}
    J_{con}'(\lambda^*) < 0.
\end{equation}

\textbf{Step 4: Conclusion of Strict Dominance.} \\
The constrained objective $J_{con}(\lambda)$ is strictly concave in $\lambda$ because from \eqref{eq:first-deriv}, we have
\[J_{con}^{\prime\prime}(\lambda) =-\frac{(c-1)^2}{c\lambda^3}-\int_{0}^{z_c(\lambda)}\frac{1}{(\lambda+\frac{1}{z-\mu})^2}dz<0.\]
Because $J_{con}'(\lambda^*) < 0$, the function $J_{con}(\lambda)$ is strictly decreasing at $\lambda^*$. Therefore, the unique global maximizer $\lambda_{new}$ must lie strictly to the left of $\lambda^*$ (i.e., $\lambda_{new} < \lambda^*$). 

Since $\lambda^*$ is not the maximizer of the strictly concave function $J_{con}$, we rigorously conclude:
\begin{equation}
    J_{con}(\lambda_{new}) > J_{con}(\lambda^*).
\end{equation}
By definition, $J_{con}(\lambda_{new}) = \mathbb{E}_{Q}[\log E^\prime(Z)]$ and $J_{con}(\lambda^*) = \mathbb{E}_{Q}[\log E^*(Z)]$. Therefore:
\begin{equation}
    \mathbb{E}_{Q}[\log E^\prime(Z)] > \mathbb{E}_{Q}[\log E^*(Z)],
\end{equation}
concluding our proof.
\end{proof}

\end{document}